\def\BibTeX{{\rm B\kern-.05em{\sc i\kern-.025em b}\kern-.08em
    T\kern-.1667em\lower.7ex\hbox{E}\kern-.125emX}}
\newtheorem{theorem}{Theorem}
\newtheorem{remark}{Remark}
\newcommand{\dl}{\delta}
\newcommand{\msf}{\mathsf}
\newcommand{\lp}{\left(}
\newcommand{\rp}{\right)}
\begin{document}

% \title{Erasure Interference Channels with Feedback and Random Cache}
\title{Topological Content Delivery with\\Feedback and Random Receiver Cache}

\author{
Alireza~Vahid,~\IEEEmembership{Senior~Member,~IEEE}
\thanks{Alireza Vahid is with the Electrical Engineering Department of the University of Colorado Denver, Denver, CO, USA. Email: {\sffamily alireza.vahid@ucdenver.edu}.}
}

\maketitle

%%%%%%%%%%%%%%%%%%%%%%%%%%%%%%%%%%%%%%%%%%%%%%%%%%%

\begin{abstract}
We study the problem of content delivery in two-user interference channels with altering topology, random available cache at the receivers, and delayed channel knowledge at the transmitter. We establish a new set of outer-bounds on the achievable rates when each receiver has access to a random fraction of the message intended for the other receiver, and when each transmitter is aware of which part of its own message is known to the unintended receiver. The outer-bounds reveal the significant potential rate boost associated with even a small amount of side-information at each receiver. The key in deriving the bounds is to quantify the baseline entropy that will always become available to the unintended receiver given the altering topology and the already available side-information. We will also present matching achievable rates in certain scenarios and outline the challenges in more general settings.  
\end{abstract}

\begin{IEEEkeywords}
Random cache, interference channel, packet erasure, side-information, causal feedback, channel state information.
\end{IEEEkeywords}

%%%%%%%%%%%%%%%%%%%%%%%%%%%%%%%%%%%%%%%%%%%%%%%%%%%

\section{Introduction}
\label{Section:Introduction_EICwCache}

Smartly populating the available local memory, or cache, can greatly enhance the data throughput and latency in wireless networks~\cite{maddah2015cache,naderializadeh2017fundamental,sengupta2017fog,hachem2018degrees,zhang2017fundamental,shariatpanahi2017multi}. Unfortunately, predetermining what needs to be placed at each user's local cache may not be feasible in practice due to privacy issues, lack of centralized decision-making, and mobility of the wireless nodes. In this work, we focus on the benefit of random receiver cache in wireless systems wherein users overhear a random portion of the signal intended for other users through the shared medium. In particular, we do not rely on a conventional two phase strategy where the first phase is for content placement, and the second is for content delivery. Instead, we investigate how to harness the random signal that has become available to different users in order to enhance network throughput.

To provide fundamental results on the capacity region in such scenarios, we focus on interference channels with altering topology. More specifically, we consider the canonical two-user interference channel where each wireless link may be active or inactive (or down) according to some Markov process, and these processes may be correlated across users. This model has grown popular in recent years as it provides a suitable framework to study intermittent communications in massive machine-type systems and high packet failure rate in mmWave communications. We provide a brief summary of the efforts on this model later in the introduction. The randomness in the available receiver-end cache is generated by independent erasure processes, and the transmitters are aware of which portion of their own messages is available to the unintended receiver. We further assume the transmitters become aware of the network topology with unit delay, a suitable model for mmWave and machine-type communications. As the topology is captured by whether each link is active or not, this latter assumption can be thought of as the delayed channel state information at the transmitter (delayed CSIT) model.

\noindent {\bf Contributions}: We present a new set of outer-bounds on the capacity region of the two-user interference channel with altering topology and channel state feedback. The first step in the derivation of the outer-bounds is to quantify the baseline entropy that will always become available to the unintended user regardless of the communication strategy. In particular, the key is to incorporate the apriori side-information at each receiver's local cache, the altering topology, and the delayed channel feedback into our analysis. Next, the outer-bounds are derived by using a genie-aided argument to convert the channel into a one-sided interference channel and then, the bounds are obtained by applying the baseline entropy inequality discussed above. The outer-bounds of course recover those known previously in the literature for the no-cache and full-cache (when the entire message of each user is available to the other one) scenarios. Interestingly, the outer-bounds suggest even a small amount of side-information may drastically improve the capacity region as we will discuss later in the paper.

We then investigate under what conditions these outer-bounds can be achieved. We provide two sets of conditions. First, we show for ``strong channels'' and ``small cache'' sizes (to be quantified in the main results), we can achieve the sum-capacity with symmetric channel parameters. Second, we identify a subset of these conditions for which the entire outer-bound region can be achieved and thus, characterizing the capacity region in those cases. We will highlight the challenges when these conditions are not met and discuss whether we believe the inner or the outer-bounds need to be improved.

\noindent {\bf  Summary of Results on Interference Channels with Altering Topology}: The interference channel with altering topology or the erasure interference channel (EIC) was first introduced in~\cite{vahid2011interference}, where it was referred to as the ``binary fading'' model, to generalize the erasure channel to incorporate interference from other transmitters. The capacity region of the two-user EIC with output feedback was reported in~\cite{vahid2012binary} followed by a comprehensive set of results covering the capacity region under delayed and instantaneous CSIT with or without output feedback in~\cite{AlirezaBFICDelayed}. The model and the results were shown to be a good representative of mmWave packet communications~\cite{vahid2014communication,mishra2017harnessing,lin2018gaussian} and topological dynamics of wireless networks~\cite{sun2013topological,issa2015two}. The model was also proven valuable in studying the impact of channel correlation~\cite{vahid2016does,vahid2018throughput} and local delayed knowledge~\cite{vahid2015impact,vahid2016two} on the capacity of distributed wireless networks. Interestingly, the capacity region under the no CSIT assumption and arbitrary erasure probabilities remains open, and the best known inner and outer bounds were reported in~\cite{vahid2017binary} with alternative proof in~\cite{lin2020fast,lin2020new}, echoing the famous ``W-curve'' result of~\cite{etkin2008gaussian}. The model has also been used to study the stability region of interference channels~\cite{nassirpour2020throughput} where newer coding techniques compared to the study of the capacity region were reported. Finally, this model was adopted in~\cite{kassab2020space,kassab2020uncoordinated} to investigate the coexistence of critical and non-critical IoT services.

\noindent {\bf Paper Organization}: The rest of the paper is organized as follows. In Section~\ref{Section:Problem_EICwCache}, we present the problem setting and the assumptions we make in this work. Section~\ref{Section:Main_EIC} presents the main contributions and provides further insights and interpretations of the results. The proof of the outer-bounds are presented in Section~\ref{Section:Converse_EIC}, and the achievability region is derived in Section~\ref{Section:Achievability_EIC}. Finally, Section~\ref{Section:Conclusion_EIC} concludes the paper.
 
%%%%%%%%%%%%%%%%%%%%%%%%%%%%%%%%%%%%%%%%%%%%%%%%%%%

\section{Problem Formulation}
\label{Section:Problem_EICwCache}

To quantify the impact of available random receiver cache on the capacity region of interference channels with altering topology, we consider the canonical two-user erasure interference channel (EIC) of Figure~\ref{Fig:EICwCache}. The erasure channel model captures altering network topology~\cite{sun2013topological} or packet failure~\cite{vahid2014communication}. In this network, two single-antenna transmitters, $\msf{Tx}_1$ and $\msf{Tx}_2$, wish to transmit two independent messages, $W_1$ and $W_2$, to their corresponding single-antenna receiving terminals, $\msf{Rx}_1$ and $\msf{Rx}_2$, respectively, over $n$ channel uses. 

\begin{figure}[!ht]
\centering
\includegraphics[width = 0.75\columnwidth]{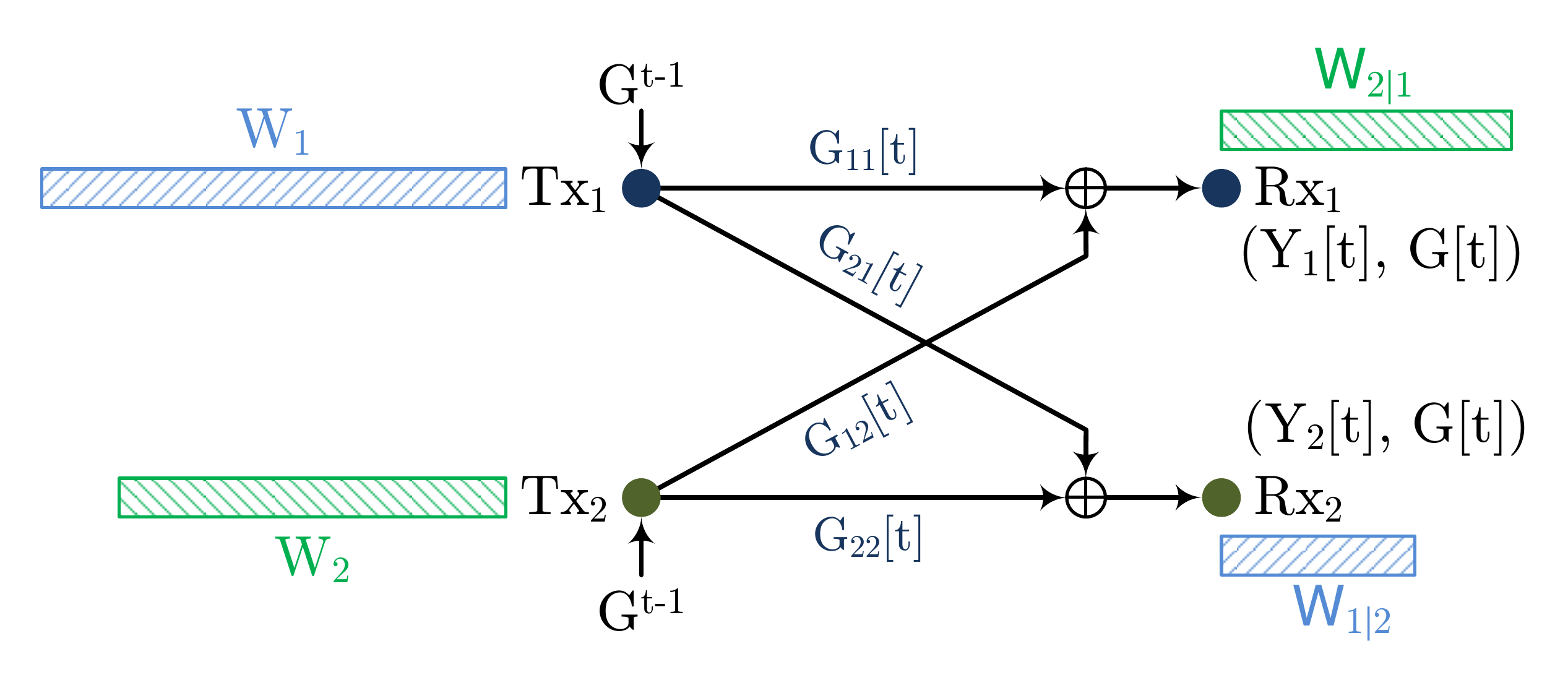}
\caption{The two-user interference channel with altering topology, random local cache at the receivers, and delayed channel knowledge. Each wireless link may be active or down, creating all possible topology configurations for this network.\label{Fig:EICwCache}}
\end{figure}

\noindent \underline{\bf Channel model:} The channel gain from transmitter ${\sf Tx}_j$ to receiver ${\sf Rx}_i$ at time $t$ is denoted by $G_{ij}[t]$, $i,j \in \{1,2\}$. The channel gains are either $0$ or $1$ (\emph{i.e.} $G_{ij}[t] \in \{0,1\}$), and they are distributed as Bernoulli random variables. The channels are assumed to be distributed independently across time but {\it not} necessarily across users. Here, when a channel value is equal to $0$, the receiver does not obtain the corresponding transmitter's signal. We assume: 
\begin{align}
\label{Eq:DeltasGeneral}
& \Pr \left(G_{ij}[t] = 0 \right) = \dl_{ij} \qquad i,j \in \{ 1, 2 \}, \nonumber \\
& \Pr \left(G_{i1}[t] = 0, G_{i2}[t] = 0 \right) = \dl_{{\sf Rx}_i} \qquad i = 1,2, \nonumber \\
& \Pr \left(G_{1j}[t] = 0, G_{2j}[t] = 0 \right) = \dl_{{\sf Tx}_j} \qquad j = 1,2,
\end{align}
for $0 \leq \dl_{ij}, \dl_{{\sf Rx}_i}, \dl_{{\sf Tx}_j} \leq 1$. We note that when the channel gains are distributed independently across users, we have 
\begin{align}
\dl_{{\sf Rx}_i} = \dl_{i1}\dl_{i2}, \qquad \text{and} \qquad \dl_{{\sf Tx}_j} = \dl_{1j}\dl_{2j}. 
\end{align}

\noindent \underline{\bf Input and output signals:} At each time instant $t$, the transmit signal of ${\sf Tx}_j$ is denoted by $X_j[t] \in \{ 0, 1 \}$, and the received signal at ${\sf Rx}_i$ is given by
\begin{equation} 
\label{Eq:ReceivedSignal}
Y_i[t] = G_{ii}[t] X_i[t] \oplus G_{i\bar{i}}[t] X_{\bar{i}}[t], \quad i = 1, 2, \quad \bar{i} \overset{\triangle}= 3 - i,
\end{equation}
where all algebraic operations are in $\mathbb{F}_2$. We note that one could assign a continuous channel gain beyond the binary coefficient and also assume additive noise at the receivers, however, this will not change the fundamental of this problem as was the case in~\cite{sun2013topological,issa2015two}. Further, the results can be easily extended to the case where signals are in $\mathbb{F}_q$ and a correction factor of $\log_2q$ will be added to the inner and outer bounds. %, and $\bar{i} = 3 - i$. 

\begin{remark}
Each point-to-point link in this network is effectively an erasure channel, but instead of representing the output by a symbol in $\{ 0, e, 1\}$, we use a channel gain in the binary field. When the link is equal to $1$ (\emph{i.e.} the link is on), the binary output equals to the input, and when the link is equal to $0$ (\emph{i.e.} the link is off), the output is deterministically zero. Below, we explain that the receiver knows the channel value, and thus, can map the observed zero due to the channel being off to an erasure. This way of modeling enables us to easily describe interference at each receiver as in~\eqref{Eq:ReceivedSignal}.
\end{remark}

%We note that for a point-to-point channel, the binary coefficient models erasure. More precisely and as we discuss shortly, each receiver is aware of the channel realizations and can map to erasure a zero created to channel gains being equal to zero.

\noindent \underline{\bf Channel state information:} We define the channel state information (CSI) at time $t$ to be the quadruple 
\begin{align}
\label{Eq:CSI}
G[t] \overset{\triangle}= \left( G_{11}[t], G_{12}[t], G_{21}[t], G_{22}[t] \right),
\end{align}
and for natural number $k$, we set 
\begin{align}
G^{k} \overset{\triangle}= \left( G[1], G[2], \ldots, G[k] \right)^{\top},
\end{align}
where $G[t]$ is defined in (\ref{Eq:CSI}), and $\left( \cdot \right)^\top$ denotes the transpose operation. Finally, we set 
% {\small 
\begin{align}
G_{ii}^t X_i^t \oplus G_{i\bar{i}}^t X_{\bar{i}}^t \overset{\triangle}= \left[ G_{ii}[1] X_i[1] \oplus G_{i\bar{i}}[1] X_{\bar{i}}[1], \ldots, G_{ii}[t] X_i[t] \oplus G_{i\bar{i}}[t] X_{\bar{i}}[t] \right]^{\top}. % \nonumber 
\end{align} % }

\noindent \underline{\bf Messages:} Each message, $W_i$, contains $m_i$ data packets, and we denote the packets for $\msf{Rx}_1$ with $\vec{a} = \left( a_1,a_2,\ldots,a_{m_1} \right)$, and the packets for $\msf{Rx}_2$ with $\vec{b} = \left(b_1,b_2,\ldots,b_{m_2} \right)$. Here, we note each packet is a collection of encoded bits, however, for simplicity and without loss of generality, we assume each packet is in the binary field, and we refer to them as bits. As mentioned earlier, if we assume the packets are in $\mathbb{F}_q$ instead, all that would be needed is a correction factor of $\log_2q$ in the inner and outer bounds.

% Extensions to larger fields is straightforward as in~\cite{vahid2014communication}.

\noindent \underline{\bf Available CSI at the Transmitters}: In this work, we consider the delayed CSIT model in which at time $t$, each transmitter has the knowledge of the channel state information up to the previous time instant (\emph{i.e.} $G^{t-1}$) as depicted in Figure~\ref{Fig:EICwCache}, and the distribution from which the channel gains are drawn, $t=1,2,\ldots,n$. 

\noindent \underline{\bf Available CSI at the Receivers}: At time instant $t$, ${\sf Rx}_i$ has the its local channel state information up to time $t$ (\emph{i.e.} $G_{ii}^{t}$ and $G_{i\bar{i}}^{t}$), see Figure~\ref{Fig:EICwCache}, and the distribution from which the channel gains are drawn. Each receiver then broadcasts its local CSI which becomes available to all other nodes with unit delay. To make notations simpler, and since receivers only decode the messages at the end of the communication block, we assume both receivers have instantaneous knowledge of the entire CSI. We note that each channel gain in the intermittent (erasure) model captures the success or the failure in delivering a large number of bits in the forward channel, and thus, the feedback overhead is negligible. This also explains why the feedback channel is used to share CSI rather than information about the received signals as in~\cite{AlirezaISIT,vahid2012interference} (\emph{i.e.} the channel output feedback assumption).

\noindent \underline{\bf Random receiver cache:} We assume a random fraction $(1-\epsilon_{i})$ of the bits intended for receiver $\msf{Rx}_{\bar{i}}$ are available at $\msf{Rx}_{{i}}$, $\bar{i} = 3 - i$, and we denote this side information with $W_{\bar{i}|i}$ as in Figure~\ref{Fig:EICwCache}. In particular, we assume that each packet intended for $\msf{Rx}_{\bar{i}}$ becomes available to $\msf{Rx}_i$ according to a Bernoulli $\lp 1 - \epsilon_i \rp$ process distributed independently from all other processes and the messages, and that
\begin{align}
\label{Eq:MessageSplitting}
H\lp W_{\bar{i}|i} \rp = \lp 1 - \epsilon_i \rp H\lp W_{\bar{i}} \rp, \qquad i=1,2.
\end{align}

\begin{remark}
\label{remark:sidechannel}
The assumptions we made on the available side-information at each receiver could also be represented using an erasure side channel. More precisely, we can assume available side-information to receiver $i$ is created through
\begin{align}
E_i[\ell]W_i[\ell], \qquad \ell = 1,2,\ldots, nR_i,
\end{align}
where $W_i[\ell]$ is the $\ell^\mathrm{th}$ bit of message $W_i$, and $E_i[\ell]$ is an i.i.d. Bernoulli $(1-\epsilon_i)$ process independent of all other channel parameters. 
\end{remark}

\noindent \underline{\bf  Transmitter's knowledge of side-information:} We consider the scenario in which the transmitters know exactly what fraction of their {\it own} messages is available to the unintended receiver. From the perspective of Remark~\ref{remark:sidechannel}, each transmitter learns the side channel gains associated with its receiver. 

% In other words, transmitter ${\sf Tx}_1$ on top of $W_1$ knows $W_{1|2}$, and ${\sf Tx}_2$ on top of $W_2$ knows $W_{2|1}$. 

\noindent \underline{\bf Encoding:} The constraint imposed at the encoding function $f_{i,t}(.) $ at time index $t$ is given by:
\begin{align}
\label{eq_enc_function_EICwCache0}
X_i[t] = f_{i,t}\lp W_i, G^{t-1} \rp, 
\end{align}
however, to highlight the transmitters' knowledge of the available side-information at the unintended receiver, we use the following notation:
\begin{align}
\label{eq_enc_function_EICwCache}
X_i[t] = f_{i,t}\lp W_i, W_{i|\bar{i}}, G^{t-1} \rp, 
\end{align}
where we implicitly assume the knowledge of $\delta_1, \delta_2, \epsilon_1,$ and $\epsilon_2$ is available to each transmitter as side-information. % , and that the transmitter is aware of the side-information available to the unintended receiver. We note that $W_{i|\bar{i}}$ is a subset of $W_i$, but we include it in the encoding function to highlight the transmitters' knowledge of the available side-information at the unintended receiver.

%$\bar{W}_{i|\bar{i}}$ is the complement of $W_{i|\bar{i}}$ with respect to $W_i$, $i=1,2$.

\noindent \underline{\bf Decoding:} Each receiver $\msf{Rx}_i$, $i=1,2$, uses a decoding function $\varphi_{i,n}\left( Y_i^n, G^n, W_{\bar{i}|i} \right)$ to get an estimate $\widehat{W}_i$ of $W_i$. An error occurs whenever $\widehat{W}_i \neq W_i$. The average probability of error is given by
\begin{align}
\lambda_{i,n} = \mathbb{E}[P(\widehat{W}_i \neq W_i)],
\end{align}
where the expectation is taken with respect to the random choice of the transmitted messages.

\noindent \underline{\bf Capacity region:} We say that a rate pair $(R_1,R_2)$ is achievable, if there exist block encoders at the transmitters, and block decoders at the receivers, such that $\lambda_{i,n}$ goes to zero as the block length $n$ goes to infinity. The capacity region is the closure of the set of the achievable rate pairs and is denoted by $\mathcal{C}$.

%%%%%%%%%%%%%%%%%%%%%%%%%%%%%%%%%%%%%%%%%%%%%%%%%%%

\section{Main Results}
\label{Section:Main_EIC}

In this section, we present the main contributions of this paper and provide some insights and intuitions about the findings.

\subsection{Outer-bounds}

The following theorem establishes a new set of outer-bounds on the capacity region of the two-user interference channel with altering topology and random receiver cache. 

\begin{theorem}[Outer-bounds]
\label{THM:Capacity_Out_EICwCache}
For the two-user interference channel with altering topology, delayed CSIT, and random receiver cache as described in Section~\ref{Section:Problem_EICwCache}, we have
\begin{equation}
\label{Eq:Capacity_Out_BIC_No}
\mathcal{C}  \subseteq \mathcal{C}^\mathrm{out} \equiv
\left\{ \begin{array}{ll}
0 \leq R_i \leq \left( 1 -\dl_{ii} \right), & i=1,2, \\
0 \leq \beta_i R_i + R_{\bar{i}} \leq \left( 1 - \dl_{{\sf Rx}_{\bar{i}}} \right), & i = 1,2.
\end{array} \right.
\end{equation}
where
\begin{align}
\label{Eq:Beta_EICwCache}
\beta_i \overset{\triangle}= \frac{\epsilon_{\bar{i}} \left( 1 - \dl_{\bar{i}i} \right)}{\left( 1 - \dl_{{\sf Tx}_i} \right)}.
\end{align}
\end{theorem}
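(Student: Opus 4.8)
\medskip

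\noindent\emph{Proof plan.} I would prove the four families of inequalities separately; the two weighted sum-rate constraints carry all the work. Non-negativity of the rates is trivial, and the bounds $R_i\le 1-\dl_{ii}$ follow from a point-to-point argument: handing ${\sf Rx}_i$ the entire sequence $X_{\bar i}^n$ (so that interference is stripped) together with its side-information $W_{\bar i|i}$, what remains is a binary erasure channel from ${\sf Tx}_i$ with erasure probability $\dl_{ii}$, and Fano's inequality gives the claim. By the symmetry of the model under $1\leftrightarrow 2$, it then suffices to establish $\beta_1 R_1+R_2\le 1-\dl_{{\sf Rx}_2}$, which I would obtain from a genie-aided reduction to a one-sided channel, a Fano step at each receiver, and a ``baseline-entropy'' lemma lower-bounding the interference that ${\sf Tx}_1$'s transmissions unavoidably leak to ${\sf Rx}_2$.

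\medskip

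\noindent Set $\mathcal{I}\triangleq(W_{1|2},W_2,W_{2|1},G^n)$ (the side-channel erasure patterns of Remark~\ref{remark:sidechannel} may be appended, and are suppressed). \emph{Genie at ${\sf Rx}_1$:} since ${\sf Rx}_1$ recovers $W_1$ from $(Y_1^n,G^n,W_{2|1})$, the genie $W_2$ lets it form $X_2^n=\{f_{2,t}(W_2,W_{2|1},G^{t-1})\}_t$ and hence $G_{11}^n X_1^n=Y_1^n\oplus G_{12}^n X_2^n$; conditioning additionally on $W_{1|2}$ and applying Fano's inequality gives
\begin{align}
\epsilon_2\, nR_1 \;=\; H(W_1\mid W_{1|2}) \;\le\; H(G_{11}^n X_1^n\mid\mathcal{I})+o(n),
\end{align}
where $H(W_1\mid W_{1|2})=\epsilon_2\,nR_1$ follows from \eqref{Eq:MessageSplitting} and the independence of $W_1$ (given $W_{1|2}$) from $(W_2,W_{2|1},G^n)$. \emph{Rate of ${\sf Rx}_2$:} Fano's inequality at ${\sf Rx}_2$ together with one entropy expansion yields
\begin{align}
nR_2 \;\le\; H(Y_2^n\mid G^n,W_{1|2}) - H(Y_2^n\mid W_2,G^n,W_{1|2})+o(n);
\end{align}
a per-symbol bound gives $H(Y_2^n\mid G^n,W_{1|2})\le\sum_t H(Y_2[t]\mid G[t])\le n(1-\dl_{{\sf Rx}_2})$, since $Y_2[t]$ is deterministically zero whenever both incoming links of ${\sf Rx}_2$ are off, while conditioning further on the genie $W_{2|1}$ and removing the now-known $G_{22}^n X_2^n$ gives $H(Y_2^n\mid W_2,G^n,W_{1|2})\ge H(G_{21}^n X_1^n\mid\mathcal{I})$.

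\medskip

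\noindent \emph{Baseline-entropy lemma.} The crux is the inequality
\begin{align}
H(G_{21}^n X_1^n\mid\mathcal{I})\;\ge\;\frac{1-\dl_{21}}{1-\dl_{{\sf Tx}_1}}\;H(G_{11}^n X_1^n\mid\mathcal{I}).
\end{align}
To prove it, introduce the ``union'' signal $U[t]\triangleq(G_{11}[t]\vee G_{21}[t])\,X_1[t]$. Because $G^n\in\mathcal{I}$, each of $G_{11}^n X_1^n$ and $G_{21}^n X_1^n$ is a deterministic function of $(U^n,G^n)$, so $H(G_{11}^n X_1^n\mid\mathcal{I})\le H(U^n\mid\mathcal{I})=\sum_t H(U[t]\mid U^{t-1},\mathcal{I})$. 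The delayed-CSIT constraint $X_1[t]=f_{1,t}(W_1,W_{1|2},G^{t-1})$ makes $(X_1^t,U^{t-1})$ independent of $(G[t],\dots,G[n])$ given $(W_1,W_{1|2},G^{t-1})$; conditioning on the realization of $G^n$ then reduces each per-symbol term to the averaging identities $H(U[t]\mid U^{t-1},\mathcal{I})=(1-\dl_{{\sf Tx}_1})\,\widetilde P_t$ and $H(G_{21}[t]X_1[t]\mid U^{t-1},\mathcal{I})=(1-\dl_{21})\,\widetilde P_t$, with $\widetilde P_t\triangleq H(X_1[t]\mid U^{t-1},W_{1|2},W_2,W_{2|1},G^{t-1})$ and $1-\dl_{{\sf Tx}_1}=\Pr(G_{11}[t]=1\text{ or }G_{21}[t]=1)$ from \eqref{Eq:DeltasGeneral}. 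Since $G_{21}^{t-1}X_1^{t-1}$ is a function of $(U^{t-1},G^{t-1})$, summing gives $H(G_{21}^n X_1^n\mid\mathcal{I})\ge\sum_t(1-\dl_{21})\widetilde P_t=\tfrac{1-\dl_{21}}{1-\dl_{{\sf Tx}_1}}H(U^n\mid\mathcal{I})$, proving the lemma. Chaining the three ingredients and recalling $\beta_1=\epsilon_2(1-\dl_{21})/(1-\dl_{{\sf Tx}_1})$ yields $\beta_1 R_1+R_2\le 1-\dl_{{\sf Rx}_2}+o(1)$; letting $n\to\infty$ and invoking symmetry completes the proof, and the specialization $\epsilon_i\to 1$ recovers the known no-cache outer bound.

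\medskip

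\noindent \emph{Main obstacle.} I expect the baseline-entropy lemma to be the delicate point: one must (i) keep the \emph{entire} state sequence $G^n$ in the conditioning so that each per-symbol split is an averaging identity and not a loose inequality; (ii) check that the delayed-CSIT independence $X_1[t]\perp(G[t],\dots,G[n])\mid(W_1,W_{1|2},G^{t-1})$ survives the additional conditioning on $(W_2,W_{2|1})$ and on past signals; and (iii) use only the joint law \eqref{Eq:DeltasGeneral}, so the argument is valid when the channels are correlated across users --- which is precisely what the single parameter $\dl_{{\sf Tx}_1}$ (in place of $\dl_{11}\dl_{21}$) encodes. The remaining steps are routine Fano and genie manipulations.
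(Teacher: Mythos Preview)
Your proposal is correct and follows essentially the same route as the paper: a genie giving $W_2$ to $\msf{Rx}_1$, Fano at both receivers, and a per-symbol baseline-entropy comparison that produces the ratio $(1-\dl_{21})/(1-\dl_{\msf{Tx}_1})$ via the delayed-CSIT independence of $X_1[t]$ from $G[t]$. The only cosmetic differences are that you extract the $\epsilon_2$ factor up front at the Fano step for $\msf{Rx}_1$ (via $H(W_1\mid W_{1|2})=\epsilon_2 nR_1$) whereas the paper extracts it inside its Theorem~\ref{THM:Leakage_BIC} (bounding $H(Y_1^n\mid W_{1|2},W_2,G^n)\ge \epsilon_2 H(Y_1^n\mid W_2,G^n)-o(n)$), and you work with the interference-free signals $G_{i1}^n X_1^n$ and the union signal $U[t]=(G_{11}[t]\vee G_{21}[t])X_1[t]$ instead of the received signals $Y_i^n$ and the pair $(Y_1[t],Y_2[t])$ --- equivalent formulations once $W_2$ is in the conditioning.
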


Before establishing the condition under which these bounds are achievable, we provide some interpretations and insights into these converse bounds. First, we re-write the bounds for when the channels are distributed independently across users, and $\delta_{ij} = \dl, \epsilon_i = \epsilon$, $i,j \in \{ 1,2 \}$. In other words, we focus on the symmetric model with independent links. Then, we can modify the region in \eqref{Eq:Capacity_Out_BIC_No} as
\begin{equation}
\label{Eq:RegionSym}
\left\{ \begin{array}{ll}
0 \leq R_i \leq \left( 1 -\dl \right), & i=1,2, \\
0 \leq \frac{\epsilon}{1+\dl} R_i + R_{\bar{i}} \leq \left( 1 - \dl^2 \right), & i = 1,2.
\end{array} \right.
\end{equation}

\begin{remark}
\label{Remark:Threshold}
From \eqref{Eq:RegionSym}, we conclude that when $\epsilon \leq \delta (1+\delta)$, the region is simply described by $0 \leq R_i \leq \left( 1 - \delta \right)$. We note that this latter expression describes the capacity of two parallel non-interfering point-to-point erasure channels. To put into perspective, when $\delta = 1/2$ and $\epsilon \leq 3/4$ (\emph{i.e.} only $1/4$ of each message is available to the unintended user), the outer-bound matches that of two non-interfering erasure channels.
\end{remark}

\begin{figure}[!ht]
\centering
\includegraphics[width = 0.65\columnwidth]{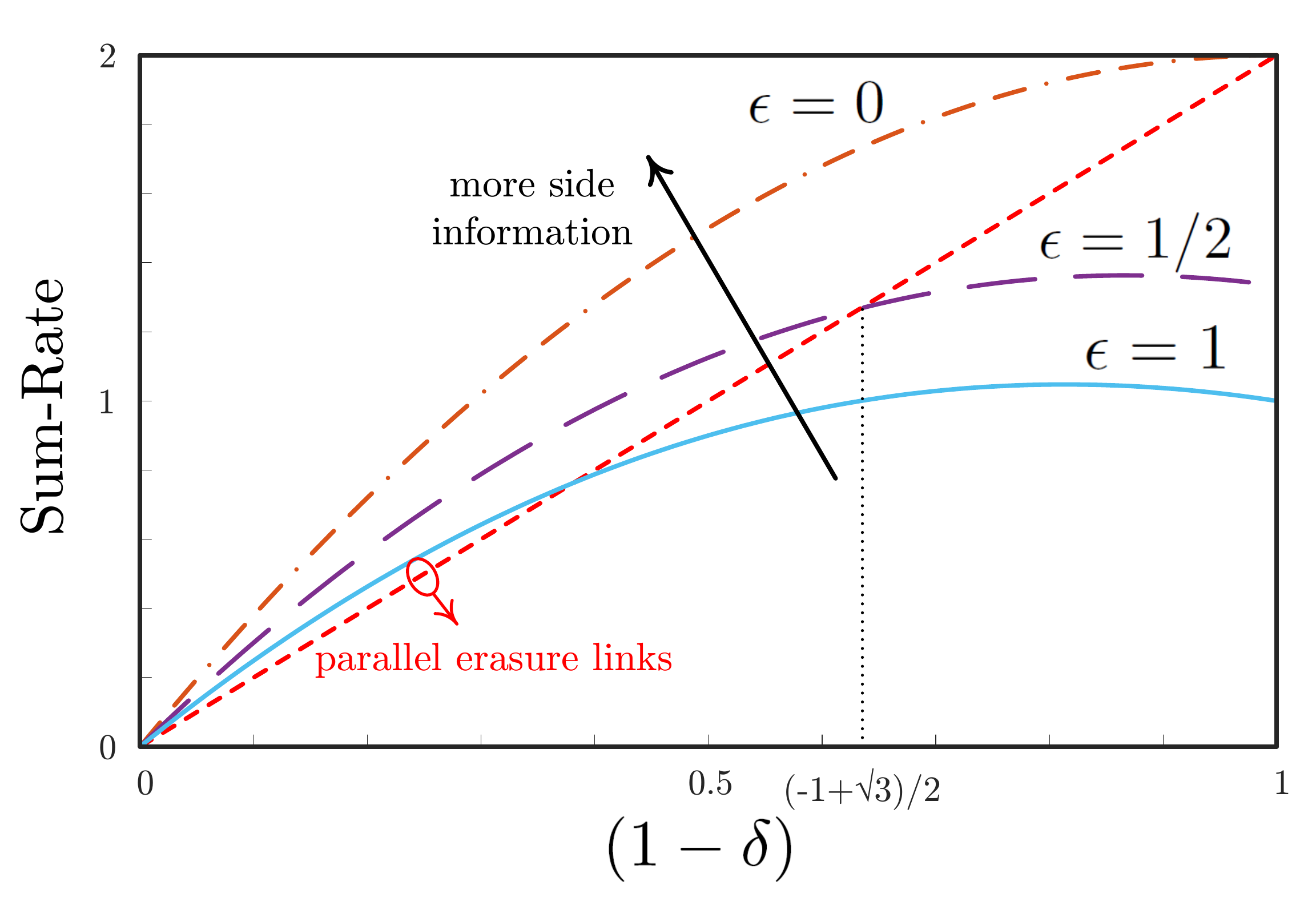}
\caption{The new outer-bounds in \eqref{Eq:NewBounds-EIC-sym} for $\epsilon \in \{ 1, 1/2, 0\}$ as well as the parallel (non-interfering) sum-rate bound as a red dashed line. Note that for convenience, the x-axis represents $(1 - \dl)$, which the probability of each link being active.\label{Fig:Sum-Rates}}
\end{figure}

% {\color{blue} mention: (1) x-axis is $(1-\dl)$, (2) the point with $\sqrt{3}$}

Figure~\ref{Fig:Sum-Rates} depicts the parallel (non-interfering) sum-rate bound of $2(1 - \dl)$ as well as the sum-rate outer-bound obtained by intersecting  
\begin{align}
\label{Eq:NewBounds-EIC-sym}
\frac{\epsilon}{1+\dl} R_i + R_{\bar{i}} \leq \left( 1 - \dl^2 \right), \qquad i = 1,2,
\end{align}
of \eqref{Eq:RegionSym}. For convenience, the x-axis represents $(1 - \dl)$, which the probability of each link being active. As noted in Remark~\ref{Remark:Threshold} (and also observed in~\cite{AlirezaBFICDelayed} for the no side-information scenario), depending on the value of $\epsilon$ and $\dl$, the sum-rate might be dominated by either of these bounds. For instance, for $\epsilon = 1/2$, when $(1-\dl) \leq (-1+\sqrt{3})/2$, the parallel erasure bounds are dominant and when $(1-\dl) \geq (-1+\sqrt{3})/2$, the bounds in \eqref{Eq:NewBounds-EIC-sym} are dominant. We also note that $\epsilon = 0$ corresponds to the scenario in which the entire message of each user is available to the unintended user, and thus, $2(1 - \dl)$ is easily achievable. On the other end, $\epsilon = 1$ is the scenario with no side-information, and the results recover the region in~\cite{AlirezaBFICDelayed} as expected.

% {\color{blue} Plot the sum-rate bound with $\epsilon = 0, 1/4, 1/2. 3/4, 1$.}

\begin{figure}[!ht]
\centering
\includegraphics[width = 0.4\columnwidth]{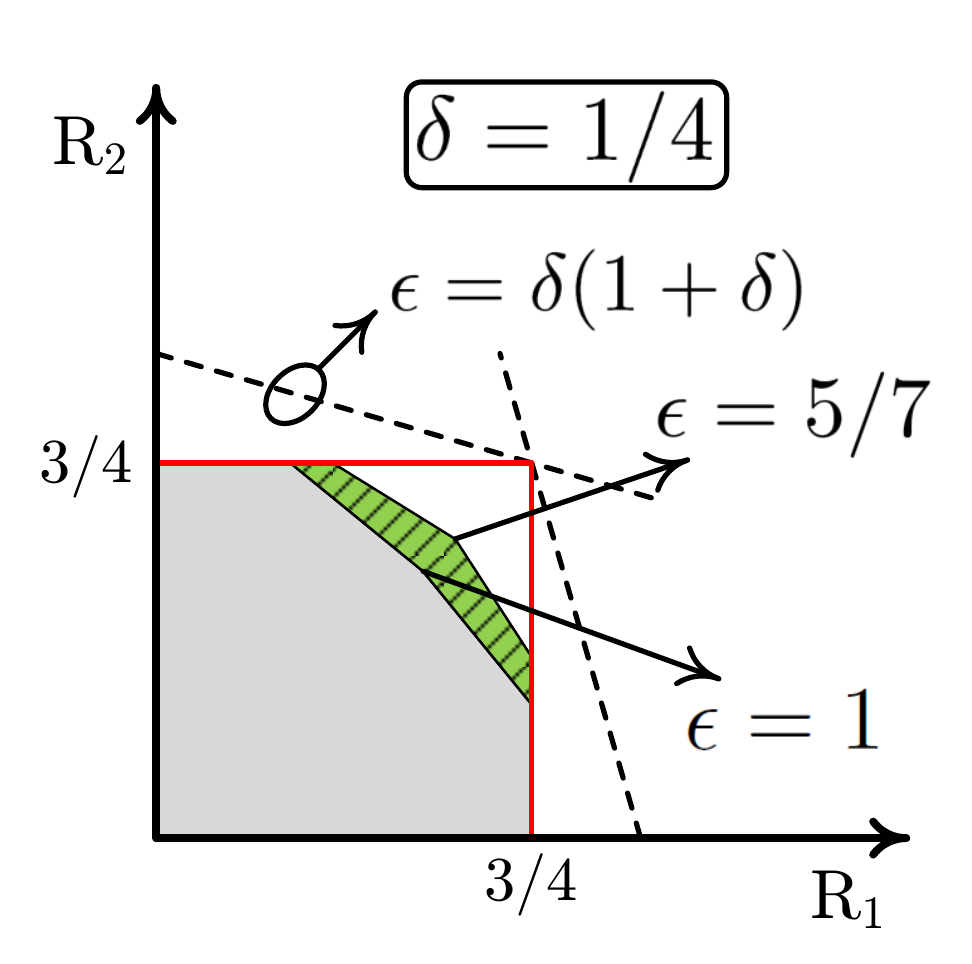}
\caption{The outer-bound region for $\dl = 1/4$ and different values of $\epsilon$.\label{Fig:Region-EIC}}
\end{figure}

Figure~\ref{Fig:Region-EIC} depicts the outer-bound region for $\dl = 1/4$ and different values of $\epsilon$. The gray shaded region is the baseline with no side-information ($\epsilon = 1$), and the hashed green region is the gain when $\epsilon = 5/7$. As we show in Theorem~\ref{THM:Achievability-EIC} and under the specified conditions, for $5/7 \leq \epsilon \leq 1$, we can achieve the outer-bounds and thus, the capacity region is characterized. We further note that for $\epsilon \leq \dl(1+\dl) = 5/16$, the outer-bound region is simply expressed by $R_i \leq (1 - \dl) = 3/4$.

% {\color{blue} Plot the region for $\dl = 3/4$ and $\epsilon = 1, \dl(1+\dl)$.}

\subsection{Achievable rates}
\label{Section:MainAch}

In this section, we consider a specific subset of the interference channels with altering topology in which a total of four topology configurations may occur at any given time as summarized in Figure~\ref{Fig:Topologies}--Topology $A$, $B$, $C$, and $D$, with respective probabilities:
\begin{align}
p_A = (1-\dl)^2,~p_B = \dl(1-\dl),~p_C = \dl(1-\dl),~p_D = \dl^2, 
\end{align}
and we further assume $\epsilon_1 = \epsilon_2 = \epsilon$. 
 
There are multiple reasons for choosing this specific channel distribution as in general with four binary links, a total of $16$ channel realizations would be possible as considered in~\cite{AlirezaBFICDelayed}. However, including all cases would make tracking the status of the previously transmitted signals more complicated; while the four channel realizations of Figure~\ref{Fig:Topologies} maintain the key technical challenges and simplify the analysis. Second, the channel realizations of Figure~\ref{Fig:Topologies} have an interesting motivation from two-unicast networks with a group of relays and the end-to-end network could be captured with these realizations~\cite{shomorony2012two,vahid2015informational,issa2015two,vahid2019degrees}. 

\begin{figure}[!ht]
\centering
\includegraphics[width = \columnwidth]{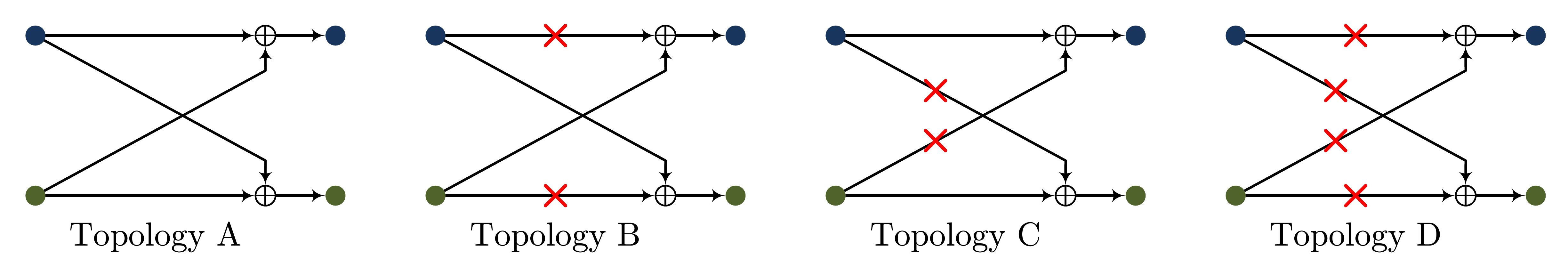}
\caption{For the inner-bounds, we assume the channel can only fall into one of four configurations at any given time.\label{Fig:Topologies}}
\end{figure}

We note that the outer-bounds of Theorem~\ref{THM:Capacity_Out_EICwCache} where derived for channels distributed independently across time but not necessarily across users. Thus, those outer-bounds hold for the channel we consider in this part as summarized in Figure~\ref{Fig:Topologies}, and can be re-written as
\begin{equation}
\label{Eq:OuterFourCases}
\left\{ \begin{array}{ll}
0 \leq R_i \leq \left( 1 - \dl \right), & i=1,2, \\
0 \leq \frac{\epsilon}{1+\dl} R_i + R_{\bar{i}} \leq \left( 1 - \dl^2 \right), & i = 1,2.
% 0 \leq R_i \leq \left( p_A + p_C \right), & i=1,2, \\
% 0 \leq \frac{\epsilon \left( p_A + p_B \right)}{1-p_D} R_i + R_{\bar{i}} \leq \left( 1 - p_D \right), & i = 1,2.
\end{array} \right.
\end{equation} 

The following theorem establishes the conditions under which the outer-bounds of Theorem~\ref{THM:Capacity_Out_EICwCache} are achievable.

\begin{theorem}[Achievability Conditions]
\label{THM:Achievability-EIC}
For the two-user interference channel with altering topology, delayed CSIT, and random receiver cache as described in Section~\ref{Section:Problem_EICwCache}, we have
\begin{enumerate}

\item {\bf Sum-Capacity}: The maximum sum-rate outer-bound of Theorem~\ref{THM:Capacity_Out_EICwCache}, specialized in \eqref{Eq:OuterFourCases} to the channel described above, is achievable when
\begin{align}
\label{Eq:conditionepsilon-EIC}
\epsilon \geq \frac{1}{1+\frac{(1 - 2\dl)^+}{1+\dl}}.
% \epsilon \geq \frac{1}{1+\frac{(p_A - p_B)^+}{1-p_D}}
\end{align}

\item {\bf Capacity Region}: The entire outer-bound region of Theorem~\ref{THM:Capacity_Out_EICwCache}, specialized in \eqref{Eq:OuterFourCases} to the channel described above, is achievable when
\begin{align}
\label{Eq:conditionepsilon2-EIC}
\epsilon \geq \max\left\{ \frac{1}{1+\frac{(1 - 2\dl)^+}{1+\dl}}, \frac{\dl(1+\dl)}{(1-\dl)} \right\}.
% \epsilon \geq \frac{1}{1+\frac{(p_A - p_B)^+}{1-p_D}}
\end{align}

\end{enumerate}
\end{theorem}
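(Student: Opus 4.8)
The plan is to prove achievability by exhibiting explicit block codes that attain the vertices of the outer-bound polygon $\mathcal{C}^{\mathrm{out}}$ of \eqref{Eq:OuterFourCases} and then invoking time-sharing. Since $\mathcal{C}^{\mathrm{out}}$ is symmetric in $(R_1,R_2)$, for part~(2) it suffices to reach its three Pareto vertices: the symmetric point $\bigl(\tfrac{(1-\delta)(1+\delta)^2}{1+\delta+\epsilon},\tfrac{(1-\delta)(1+\delta)^2}{1+\delta+\epsilon}\bigr)$ and the mirror pair $\bigl(1-\delta,\tfrac{\delta(1-\delta^2)}{\epsilon}\bigr)$, $\bigl(\tfrac{\delta(1-\delta^2)}{\epsilon},1-\delta\bigr)$ when $\epsilon\ge\delta(1+\delta)$ (otherwise $\mathcal{C}^{\mathrm{out}}$ is the rectangle and only $(1-\delta,1-\delta)$ must be reached); part~(1) is the special case of hitting whichever of these maximizes $R_1+R_2$. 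Each code is of the standard two-phase, delayed-CSIT type.

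\textbf{Phase~1 (uncoded, cache-assisted).} Each $\mathsf{Tx}_i$ sends fresh bits, transmitting first the portion of $W_i$ that the cache has already deposited at $\mathsf{Rx}_{\bar i}$ --- during those slots $\mathsf{Rx}_{\bar i}$ cancels what it overhears and both receivers make progress at once --- and then the un-cached portion. Using the unit-delay topology feedback, each transmitter sorts its sent bits, according to the realized succession of topologies $A,B,C,D$ (probabilities $p_A,\dots,p_D$) and the cache, into the usual three states: (i) already decoded at the intended receiver; (ii) present at the unintended receiver (overheard in a favorable topology, or placed there by the cache) but still missing at the intended one; (iii) not yet useful anywhere. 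By the law of large numbers the number of bits in each state concentrates on $n$ times an explicit function of $p_A,\dots,p_D$ and $\epsilon$; this is where the marginal link-on probability $1-\delta$ and the pairwise quantities $\delta^2$ that appear in \eqref{Eq:OuterFourCases} enter.

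\textbf{Phase~2 (network-coded retransmission).} As in the no-cache scheme of~\cite{AlirezaBFICDelayed}, a bit that $\mathsf{Rx}_1$ needs but $\mathsf{Rx}_2$ already holds, XORed with a bit that $\mathsf{Rx}_2$ needs but $\mathsf{Rx}_1$ already holds, is useful to both at once; delayed CSIT identifies the eligible state-(ii) bits and the cache enlarges that pool, so fewer coded symbols are required. The two transmitters coordinate these XORs from the common CSI history alone, transmit them (subject to the same erasure statistics) until both residual demands vanish, and finish with a clean-up phase of vanishing relative length. Writing ``supply $=$ demand'' for Phase~2 and imposing nonnegativity of all phase lengths turns the scheme into a small linear program in those lengths; eliminating them yields an achievable polygon, and the computation I expect is that this polygon touches the sum-rate-optimal vertex of $\mathcal{C}^{\mathrm{out}}$ exactly when \eqref{Eq:conditionepsilon-EIC} holds --- the binding nonnegativity constraint, that Phase~1 generate at least as many mergeable bits as the number of coded symbols wanted, rearranges into $\epsilon\ge\bigl(1+\tfrac{(1-2\delta)^+}{1+\delta}\bigr)^{-1}$, the $(\cdot)^+$ marking a phase length that would otherwise go negative once $\delta>\tfrac12$. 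For part~(2) the same scheme is run in an asymmetric mode (unequal Phase-1 allocations to the two users) to hit the lopsided vertices; its extra nonnegativity constraint rearranges into $\epsilon\ge\tfrac{\delta(1+\delta)}{1-\delta}$, so under \eqref{Eq:conditionepsilon2-EIC} all three vertices are attainable and time-sharing gives $\mathcal{C}=\mathcal{C}^{\mathrm{out}}$.

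\textbf{Main obstacle.} The hard part is not the algebra but the combinatorial bookkeeping underneath it: tracking the joint state (who holds it, who still needs it) of every transmitted bit across the random sequence of topologies and the random cache, showing the transmitters can agree on the Phase-2 XOR pattern from the shared CSI with no extra common randomness, proving the state counts concentrate so the clean-up phase is negligible, and --- the most load-bearing step --- verifying that the resulting inner region meets $\mathcal{C}^{\mathrm{out}}$ \emph{exactly} (not merely in the limit) on the stated $\epsilon$-range, which is precisely what forces the thresholds $\tfrac{1+\delta}{2-\delta}$ for $\delta\le\tfrac12$ and $\tfrac{\delta(1+\delta)}{1-\delta}$.
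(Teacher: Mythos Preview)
Your overall architecture --- multi-phase delayed-CSIT scheme, bit-state bookkeeping, XOR-based retransmission, concentration to kill lower-order terms --- is right, but the specific way you use the cached bits misses the paper's key idea and will not hit the outer bound. You propose to transmit the cached portion of $W_i$ \emph{first} in Phase~1, relying on $\mathsf{Rx}_{\bar i}$ to cancel what it already knows. The paper explicitly considers and rejects this: it is exactly the ``two non-interfering point-to-point erasure links'' option dismissed in the Key-Ideas discussion. What the paper does instead is \emph{hold the cached bits back} from Phase~1 entirely; Phase~1 carries only the $\epsilon m$ un-cached bits, and in Phase~2 each cached bit $c$ at $\mathsf{Tx}_1$ is XORed with a Phase-1 topology-$A$ bit $a$ to form $a\oplus c$ (and symmetrically $b\oplus d$ at $\mathsf{Tx}_2$), which is multicast at the MAC sum-rate $1-\delta^2$. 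This packs two wanted bits into one multicast symbol, and a direct calculation (e.g.\ at $\delta=\tfrac15,\epsilon=\tfrac23$) shows your send-cached-first variant yields a strictly longer total block than the paper's $t_{\mathrm{total}}=\tfrac{(1+\delta+\epsilon)m}{(1-\delta)(1+\delta)^2}$, so it does not meet the bound. The condition \eqref{Eq:conditionepsilon-EIC} is precisely the requirement that the number of topology-$A$ bits be at least the number of side-information bits (apriori cache $+$ topology-$B$) so that every side-information bit can be paired; note also that the XOR is \emph{within} one transmitter's message (this is an interference channel, so $\mathsf{Tx}_1$ cannot form $a\oplus b$ with $b\in W_2$), which your Phase-2 wording obscures.

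For the corner point $\bigl(\tfrac{\delta(1+\delta)}{\epsilon}(1-\delta),\,1-\delta\bigr)$ the paper does \emph{not} run the symmetric scheme with unequal Phase-1 allocations; that will not reach $R_2=1-\delta$ because any XOR retransmission duty at $\mathsf{Tx}_2$ eats into its rate. Instead the paper uses a genuinely different protocol: $\mathsf{Tx}_2$ transmits at full rate throughout while $\mathsf{Tx}_1$ carries the entire interference-management burden, encoding its topology-$A$ bits at the low rate $\delta(1-\delta)$ so both receivers can decode them, and in the final phase sending a \emph{superposition} of (i) those low-rate topology-$A$ codewords and (ii) its own cached bits encoded at rate $(1-\delta)^2$. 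The extra condition $\epsilon\ge\tfrac{\delta(1+\delta)}{1-\delta}$ arises not from a generic nonnegativity constraint in your LP but from the concrete requirement that the duration of this last phase (set by $\mathsf{Tx}_2$'s cached bits) suffice for $\mathsf{Tx}_1$ to clear its cached bits at rate $(1-\delta)^2$.
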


\begin{comment}

\begin{theorem}[Achievability Conditions]
\label{THM:Achievability-EIC}
The outer-bound region of Theorem~\ref{THM:Capacity_Out_EICwCache}, specialized in \eqref{Eq:OuterFourCases} to the channel described above, is achievable when
\begin{align}
\label{Eq:conditionepsilon-EIC}
\epsilon \geq \frac{1}{1+\frac{(1 - 2\dl)^+}{1+\dl}}
% \epsilon \geq \frac{1}{1+\frac{(p_A - p_B)^+}{1-p_D}}
\end{align}
\end{theorem}

\end{comment}

First, we note that for $\dl > 1/2$, the condition expressed in \eqref{Eq:conditionepsilon-EIC} implies $\epsilon = 1$, \emph{i.e.} no side-information at the receivers, which is covered in~\cite{AlirezaBFICDelayed}. Thus, we focus on $\dl \leq 1/2$, and \eqref{Eq:conditionepsilon-EIC} becomes:
\begin{align}
\label{Eq:conditionepsilon-EIC-revised}
\epsilon \geq \frac{1+\dl}{2 - \dl}, \quad \text{for~} \dl \leq \frac{1}{2}.
\end{align}
We further note that \eqref{Eq:conditionepsilon-EIC-revised} also implies that 
\begin{align}
\label{Eq:EpsilonLarger}
\epsilon \geq \dl (1+\dl), 
\end{align}
and based on the outer-bounds expressed in \eqref{Eq:RegionSym} and Remark~\ref{Remark:Threshold}, at the maximum sum-rate point, we have
\begin{align}
R_i = \frac{\left( 1 + \dl \right) \left( 1 - \dl^2 \right)}{1 + \dl + \epsilon}.
\end{align}
Finally, if $\dl \leq (+3-\sqrt{5})/2 \approx 0.382$ and the condition in \eqref{Eq:conditionepsilon-EIC} is satisfied, then \eqref{Eq:conditionepsilon2-EIC} also holds. In other words, for  $\dl \leq (+3-\sqrt{5})/2$ and $\epsilon$ satisfying \eqref{Eq:conditionepsilon-EIC}, the capacity region is characterized. 

% {\color{blue} mention here the sum-rate is not dominated by $2(1-\dl)$.}

\begin{figure}[!ht]
\centering
\includegraphics[width = 0.65\columnwidth]{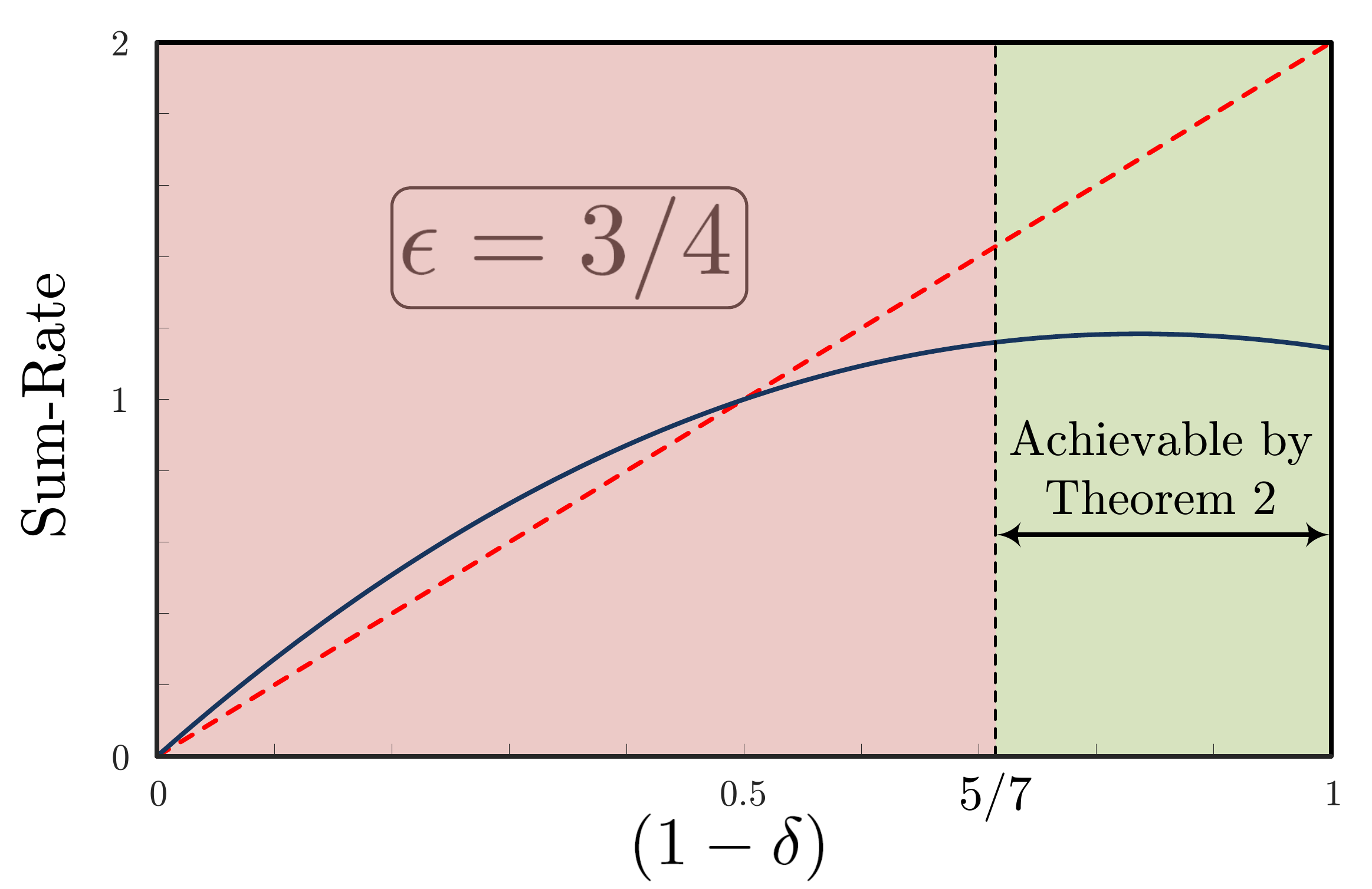}
\caption{Sum-rate outer-bound for the symmetric channel of Figure~\ref{Fig:Topologies} and the region in which the outer-bound is achievable.\label{Fig:Sum-Rates-Ach}}
\end{figure}

Figure~\ref{Fig:Sum-Rates-Ach} depicts the sum-rate outer-bound of Theorem~\ref{THM:Capacity_Out_EICwCache}, specialized in \eqref{Eq:OuterFourCases} to the channel we study in this section, for $\epsilon = 3/4$ and highlights the region in which the sum-rate outer-bound is achievable based on Theorem~\ref{THM:Achievability-EIC}.

\section{Proof of Theorem~\ref{THM:Capacity_Out_EICwCache}: Deriving the outer-bounds}
\label{Section:Converse_EIC}

In this section, we derive the outer-bounds of Theorem~\ref{THM:Capacity_Out_EICwCache}. The derivation of the outer-bounds on individual rates is straightforward, and thus, omitted. In fact, it is well established that the capacity of a point-to-point erasure channel with erasure probability $\delta_i$ is $\lp 1 - \delta_i \rp$. In what follows, we derive
\begin{align}
\label{Eq:ConverseBound}
\beta_1 R_1 + R_{2} \leq \left( 1 - \dl_{{\sf Rx}_2} \right),
\end{align}
where as indicated in \eqref{Eq:Beta_EICwCache}, we have 
\begin{align}
\label{Eq:Beta1_EICwCache}
\beta_1 = \frac{\epsilon_2 \left( 1 - \dl_{21} \right)}{\left( 1 - \dl_{{\sf Tx}_1} \right)}.
\end{align}
The derivation of the other bound would follow by simply changing user IDs $1 \leftrightarrow 2$.

Suppose rate-tupe $\lp R_1, R_2 \rp$ is achievable. We enhance receiver $\msf{Rx}_1$ by providing the entire $W_2$ to it, as opposed to $W_{2|1}$, and we note that this cannot reduce the rates. Then, we have
\begin{align}
n &\left( \beta_1 R_1 + R_2 \right) = \beta_1 H(W_1) + H(W_2) \nonumber \\
& \overset{(a)}= \beta_1 \underbrace{H(W_1|W_2, G^n)}_{\mathrm{Enhanced}~{\sf Rx}_1} + H(W_2|W_{1|2},G^n) \nonumber \\
& \overset{(\mathrm{Fano})}\leq \beta_1 I(W_1;Y_1^n|W_2,G^n) + I(W_2;Y_2^n|W_{1|2},G^n) + n \upxi_n \nonumber \\
& = \beta_1 H(Y_1^n|W_2,G^n) - \beta_1 \underbrace{H(Y_1^n|W_1,W_2,G^n)}_{=~0} \nonumber \\
& \quad + H(Y_2^n|W_{1|2},G^n) - H(Y_2^n|W_{1|2},W_2,G^n) + n \upxi_n \nonumber \\
& \overset{(b)}\leq H(Y_2^n|W_{1|2},G^n) + 2n \upxi_n \nonumber \\
& \overset{(c)}\leq n \left( 1 - \dl_{{\sf Rx}_2} \right) + 2\upxi_n,
\end{align}
where $\upxi_n \rightarrow 0$ as $n \rightarrow \infty$; $(a)$ follows from the independence of the messages and the channels, and captures the enhancement of receiver $\msf{Rx}_1$; $(b)$ follows from Theorem~\ref{THM:Leakage_BIC} below; $(c)$ is true since the entropy of a binary random variable is at most one, and the receiver is not in erasure a fraction $\lp 1 - \dl_{{\sf Rx}_2} \rp$ of the communication time. Dividing both sides by $n$ and let $n \rightarrow \infty$, we get \eqref{Eq:ConverseBound}.

\begin{theorem}
\label{THM:Leakage_BIC}
For the two-user interference channel with altering topology, delayed CSIT, and random receiver cache as described in Section~\ref{Section:Problem_EICwCache}, and $\beta_1$ given in (\ref{Eq:Beta1_EICwCache}), we have
\begin{align}
H\left( Y_2^n | W_{1|2}, W_2, G^n \right) + n \upxi_n \geq \beta_1 H\left( Y_1^n | W_2, G^n \right),
\end{align}
where $\upxi_n \rightarrow 0$ as $n \rightarrow \infty$.
\end{theorem}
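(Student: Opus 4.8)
The plan is to lower-bound the entropy that $\msf{Rx}_2$ receives about $\msf{Tx}_1$'s codeword, conditioned on what $\msf{Rx}_2$ already knows about $W_1$ (namely $W_{1|2}$) and on $W_2$ and the full CSI $G^n$. The core intuition is symmetry of the channel distribution across the two transmitters combined with the observation that, once we condition on $W_2$ and $G^n$, the received signal $Y_1^n$ at $\msf{Rx}_1$ is a deterministic function of $X_1^n$ masked by the direct-link erasures $G_{11}^n$, while $Y_2^n$ at $\msf{Rx}_2$ (after subtracting the known contribution of $W_2$, which is possible since $G^n$ and $W_2$ are given) is the same codeword $X_1^n$ masked by the cross-link erasures $G_{21}^n$. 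So morally both sides are ``how much of $X_1^n$ survives an erasure channel,'' and the ratio of surviving fractions is governed by $(1-\dl_{21})$ versus $(1-\dl_{{\sf Tx}_1})$, with the extra factor $\epsilon_2$ accounting for the fact that $\msf{Rx}_2$ only needs to learn the part of $W_1$ it does \emph{not} already have in its cache $W_{1|2}$.

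First I would reduce $H(Y_1^n|W_2,G^n)$ to an entropy involving only $X_1^n$: since $Y_1[t]=G_{11}[t]X_1[t]\oplus G_{1\bar 1}[t]X_2[t]$ and $X_2^n$ is a function of $(W_2,G^{n-1})$, conditioning on $(W_2,G^n)$ removes the interference term, giving $H(Y_1^n|W_2,G^n)=H(G_{11}^n X_1^n\mid W_2,G^n)$. Symmetrically, $H(Y_2^n|W_{1|2},W_2,G^n)=H(G_{21}^n X_1^n\mid W_{1|2},W_2,G^n)$ after removing the $X_2^n$ contribution. So the claim becomes an inequality comparing $H(G_{21}^n X_1^n\mid W_{1|2},W_2,G^n)$ with $\beta_1\,H(G_{11}^n X_1^n\mid W_2,G^n)$. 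The next step is to handle the side-information channel: using the erasure-side-channel viewpoint of Remark~\ref{remark:sidechannel}, $W_{1|2}$ is $W_1$ passed through an i.i.d.\ Bernoulli$(1-\epsilon_2)$ erasure channel, independent of everything else, and the transmitter knows which bits leaked. The standard move is to argue that, conditioned on the leaked bits, the transmitter is effectively only ``protecting'' an $\epsilon_2$-fraction of its information, so that learning $G_{21}^n X_1^n$ beyond what is implied by $W_{1|2}$ contributes at least an $\epsilon_2$-fraction as much entropy as the full $G_{11}^n X_1^n$ does — this is where the $\epsilon_2$ in $\beta_1$ comes from.

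The heart of the argument is then a per-symbol / subset-counting comparison: over the randomness of the erasure patterns, for each fixed realization of $X_1^n$ the number of coordinates of $X_1^n$ revealed to $\msf{Rx}_2$ through $G_{21}$ (minus those already revealed via the cache) should, in expectation, be at least $\beta_1$ times the number revealed to $\msf{Rx}_1$ through $G_{11}$. Because $G_{11}$ and $G_{21}$ share the same transmitter $\msf{Tx}_1$, the joint erasure statistics are constrained by $\dl_{{\sf Tx}_1}=\Pr(G_{11}=G_{21}=0)$; the relevant fact is $\Pr(G_{21}[t]=1)=(1-\dl_{21})$ while the event that $\msf{Rx}_1$'s direct link is on but contributes ``new'' information is bounded using $\Pr(G_{11}[t]=1)\le \Pr(\text{not both off})=(1-\dl_{{\sf Tx}_1})$ together with the cache thinning. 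Matching up these probabilities yields exactly the ratio $\epsilon_2(1-\dl_{21})/(1-\dl_{{\sf Tx}_1})=\beta_1$. I expect the main obstacle to be making the ``entropy is proportional to the number of revealed coordinates'' step rigorous: $X_1^n$ is an arbitrary (possibly highly structured, coded) sequence, so one cannot literally say revealing $k$ coordinates gives entropy $k$; instead one must run a careful chain-rule / conditioning argument (likely introducing an auxiliary i.i.d.\ erasure process and using the independence of the erasures from $(W_1,W_2,G^n_{\text{other links}})$, plus a concavity/averaging step over erasure patterns, and absorbing lower-order terms into $n\upxi_n$) to transfer the combinatorial inequality on patterns into the desired entropy inequality.
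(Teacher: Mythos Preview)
Your overall architecture is right and matches the paper: strip out $X_2^n$ by conditioning on $(W_2,G^n)$, run a per-symbol chain rule, and handle the cache factor $\epsilon_2$ separately via Fano. But the step where you produce the ratio $(1-\dl_{21})/(1-\dl_{{\sf Tx}_1})$ has a genuine gap, and your stated reason for the $(1-\dl_{{\sf Tx}_1})$ is not the operative one.

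After the chain rule, the two sides are
\[
(1-\dl_{21})\,H\!\left(X_1[t]\,\middle|\,Y_2^{t-1},\ldots\right)
\quad\text{versus}\quad
(1-\dl_{11})\,H\!\left(X_1[t]\,\middle|\,Y_1^{t-1},\ldots\right),
\]
with \emph{different} pasts in the conditioning. These are incomparable in general; bounding $(1-\dl_{11})\le (1-\dl_{{\sf Tx}_1})$ only makes the right side larger, which is the wrong direction, and it still does not reconcile $Y_2^{t-1}$ with $Y_1^{t-1}$. A pure ``subset-counting over revealed coordinates'' argument fails here precisely because the two erasure patterns $G_{11}^n$ and $G_{21}^n$ are not nested, so revealing more coordinates in expectation does not imply more entropy when $X_1^n$ is coded.

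The paper's fix is the missing idea: first \emph{add} $Y_1^{t-1}$ to the left-hand conditioning (this only decreases entropy), obtaining $(1-\dl_{21})\,H(X_1[t]\mid Y_1^{t-1},Y_2^{t-1},\ldots)$; then, using delayed CSIT so that $X_1[t]\perp G[t]$ given the past, rewrite
\[
H\!\left(X_1[t]\,\middle|\,Y_1^{t-1},Y_2^{t-1},\ldots\right)
=\frac{1}{1-\dl_{{\sf Tx}_1}}\,
H\!\left(G_{11}[t]X_1[t],\,G_{21}[t]X_1[t]\,\middle|\,Y_1^{t-1},Y_2^{t-1},\ldots\right),
\]
which is an \emph{equality}: the pair determines $X_1[t]$ exactly when not both outgoing links of $\msf{Tx}_1$ are off, an event of probability $1-\dl_{{\sf Tx}_1}$. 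That is where the denominator comes from, not from $\Pr(G_{11}=1)\le 1-\dl_{{\sf Tx}_1}$. Recognizing the pair as $(Y_1[t],Y_2[t])$ and summing over $t$ telescopes to $H(Y_1^n,Y_2^n\mid W_{1|2},W_2,G^n)$, which then trivially dominates $H(Y_1^n\mid W_{1|2},W_2,G^n)$. The $\epsilon_2$ is brought in afterwards, again slightly differently from your sketch: the paper uses Fano to get $H(Y_1^n\mid W_{1|2},W_2,G^n)\ge H(\bar W_{1|2})-n\upxi_n=\epsilon_2 H(W_1)-n\upxi_n$ and then $H(W_1)\ge H(Y_1^n\mid W_2,G^n)$ by determinism, rather than thinning erasure patterns.
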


\begin{proof}
For time instant $t$ where $1 \leq t \leq n$, we have
\begin{align}
& H\left( Y_2[t] | Y_2^{t-1}, W_{1|2}, W_2, G^n \right) \nonumber \\
& \quad \overset{(a)}= H\left( Y_2[t] | Y_2^{t-1}, W_{1|2}, W_2, G^t \right) \nonumber \\
& \quad \overset{(b)}= H\left( G_{21}[t]X_1[t] \oplus G_{22}[t]X_2[t]  | Y_2^{t-1}, W_{1|2}, W_2, G^t \right) \nonumber \\
& \quad \overset{(c)}= H\left( G_{21}[t]X_1[t] \oplus G_{22}[t]X_2[t]  | Y_2^{t-1}, X_2[t], W_{1|2}, W_2, G^t \right) \nonumber \\
& \quad = H\left( G_{21}[t]X_1[t] | Y_2^{t-1}, X_2[t], W_{1|2}, W_2, G^t \right) \nonumber \\
& \quad = \lp 1 - \delta_{21} \rp H\left( X_1[t] | Y_2^{t-1}, X_2[t], W_{1|2}, W_2, G_{21}[t] = 1,  G^{t-1} \right) \nonumber \\
& \quad \overset{(d)}= \lp 1 - \delta_{21} \rp H\left( X_1[t] | Y_2^{t-1}, X_2[t], W_{1|2}, W_2, G^t \right) \nonumber \\
& \quad \overset{(e)}\geq \lp 1 - \delta_{21} \rp H\left( X_1[t] | Y_1^{t-1},Y_2^{t-1}, X_2[t], W_{1|2}, W_2, G^t \right) \nonumber \\
& \quad \overset{(f)}= \frac{\lp 1 - \delta_{21} \rp}{\lp 1 - \dl_{{\sf Tx}_1} \rp} H\left( G_{11}[t]X_1[t], G_{12}[t]X_1[t] | Y_1^{t-1},Y_2^{t-1}, X_2[t], W_{1|2}, W_2, G^t \right) \nonumber \\
& \quad \overset{(g)}= \frac{\lp 1 - \delta_{21} \rp}{\lp 1 - \dl_{{\sf Tx}_1} \rp} H\left( Y_1[t], Y_2[t] | Y_1^{t-1},Y_2^{t-1}, X_2[t], W_{1|2}, W_2, G^t \right) \nonumber \\
& \quad \overset{(h)}= \frac{\lp 1 - \delta_{21} \rp}{\lp 1 - \dl_{{\sf Tx}_1} \rp} H\left( Y_1[t], Y_2[t] | Y_1^{t-1},Y_2^{t-1}, W_{1|2}, W_2, G^t \right) \nonumber \\
& \quad \overset{(i)}= \frac{\lp 1 - \delta_{21} \rp}{\lp 1 - \dl_{{\sf Tx}_1} \rp} H\left( Y_1[t], Y_2[t] | Y_1^{t-1},Y_2^{t-1}, W_{1|2}, W_2, G^n \right),
\end{align}
where $(a)$ follows from the temporal independence of the channels and the causal feedback structure;  $(b)$ follows from the channel model of \eqref{Eq:ReceivedSignal}; $(c)$ holds since according to \eqref{eq_enc_function_EICwCache0}, we have:
\begin{align}
X_2[t] = f_{2,t}\lp W_2, G^{t-1} \rp;
\end{align}
$(d)$ holds since according to the delayed CSIT assumption $X_1[t]$ is independent of the channel realizations at time instant $t$; $(e)$ follows from the fact that conditioning reduces entropy; $(f)$ comes from the fact that $\Pr\lp G_{11}[t] = G_{21}[t] = 0 \rp = \dl_{{\sf Tx}_1}$; $(g)$ follows from the fact that $X_2[t]$ and $G[t]$ are in the condition; $(h)$ holds for the same reason as step $(c)$; and $(i)$ follows for the same reason as step $(a)$.

Next, taking the summation over $t$ from $1$ to $n$, and using the fact that the transmit signal at time instant $t$ is independent of future channel realizations, we get
\begin{align}
\label{Eq:withSideInfo}
& H\left( Y_2^n | W_{1|2}, W_2, G^n \right) \geq \frac{\lp 1 - \delta_{21} \rp}{\lp 1 - \dl_{{\sf Tx}_1} \rp} H\left( Y_1^n, Y_2^n | W_{1|2}, W_2, G^n \right) \nonumber \\
& \quad \geq \frac{\lp 1 - \delta_{21} \rp}{\lp 1 - \dl_{{\sf Tx}_1} \rp} H\left( Y_1^n | W_{1|2}, W_2, G^n \right).
\end{align}
Now, for the final step, we note that
\begin{align}
\label{Eq:Deterministic}
H\left( Y_1^n | W_1, W_2, G^n \right) = 0,
\end{align}
and
\begin{align}
\label{Eq:Fano}
H\left( W_1 | Y_1^n, W_2, G^n \right) \leq n \upxi_n.
\end{align}
Thus, from \eqref{Eq:Deterministic} and \eqref{Eq:Fano}, we get
\begin{align}
\label{Eq:RemovingSideInfo}
H&\left( Y_1^n | W_{1|2}, W_2, G^n \right) \geq H\lp \bar{W}_{1|2} \rp - n \upxi_n \nonumber \\
& \overset{\eqref{Eq:MessageSplitting}}= \epsilon_2 H\lp W_1 \rp - n \upxi_n \overset{(a)} \geq \epsilon_2 H\lp Y_1^n | W_2, S^n \rp - n \upxi_n,
\end{align}
where $\bar{W}_{1|2}$ is the complement of $W_{1|2}$ in $W_1$, and $(a)$ is obtained from \eqref{Eq:Deterministic} as
\begin{align}
H\left( Y_1^n | W_1, W_2, G^n \right) = 0 \Rightarrow H\lp W_1 \rp \geq H\lp Y_1^n | W_2, G^n \rp. \nonumber
\end{align}
Finally, from \eqref{Eq:withSideInfo} and \eqref{Eq:RemovingSideInfo}, we obtain
\begin{align}
H&\left( Y_2^n | W_{1|2}, W_2, G^n \right) \overset{\eqref{Eq:withSideInfo}}\geq \frac{\lp 1 - \delta_{21} \rp}{\lp 1 - \dl_{{\sf Tx}_1} \rp} H\left( Y_1^n | W_{1|2}, W_2, G^n \right) \nonumber \\
& \overset{\eqref{Eq:RemovingSideInfo}}\geq \frac{\epsilon_2 \lp 1 - \delta_{21} \rp}{\lp 1 - \dl_{{\sf Tx}_1} \rp} H\left( Y_1^n | W_2, G^n \right) - n \upxi_n \nonumber \\
& \overset{\eqref{Eq:Beta1_EICwCache}}= \beta_1 H\left( Y_1^n | W_2, G^n \right) - n \upxi_n.
\end{align}
It is worth noting that a similar technique to what we used to obtain \eqref{Eq:RemovingSideInfo} was also used in~\cite{ghorbel2016content}. This completes the proof of Theorem~\ref{THM:Leakage_BIC}.
\end{proof}

%%%%%%%%%%%%%%%%%%%%%%%%%%%%%%%%%%%%%%%%%%%%%%%%%%%
%%%%%%%%%%%%%%%%%%%%%%%%%%%%%%%%%%%%%%%%%%%%%%%%%%%

%%%%%%%%%%%%%%%%%%%%%%%%%%%%%%%%%%%%%%%%%%%%%%%%%%%

%\section{Achievability Challenges for General Parameters}
%\label{Section:Challenges_BIC}

%%%%%%%%%%%%%%%%%%%%%%%%%%%%%%%%%%%%%%%%%%%%%%%%%%%

\section{Proof of Theorem~\ref{THM:Achievability-EIC}: Exploiting knowledge of\\ receiver side-information at the Transmitter}
\label{Section:Achievability_EIC}

%%%%%%%%%%%%%%%%%%%%%%%%%%%%%%%%%%%%%%%%%%%%%%%%%%%

In this section, we provide the proof of Theorem~\ref{THM:Achievability-EIC}, \emph{i.e.} the achievability, and we show that the achievable region matches the outer-bounds of Theorem~\ref{THM:Capacity_Out_EICwCache} under the specified conditions. We first provide a motivating example and then, present the detailed proof. We also note that in this section, for the proofs and the motivating example, we assume the conditions and assumptions of the theorem are satisfied, and later in Section~\ref{Section:EIC_Discussion} we provide some intuitions for when these conditions are violated.

\subsection{Key ideas and motivating example}
\label{Section:KeyIdeas}

\noindent {\bf Key ideas}: The power of wireless is in multicast. More specifically, gains are magnified if we can satisfy multiple users simultaneously. This simple and intuitive idea is behind most wireless communication algorithms. To see how this idea can be used in interference channels with altering topology, delayed CSIT, and random receiver cache, we first explain the network coding opportunities, and then, present an example to explain the overall achievability strategy.

\begin{figure}[!ht]
\centering
\subfigure[]{\includegraphics[width = 0.43\columnwidth]{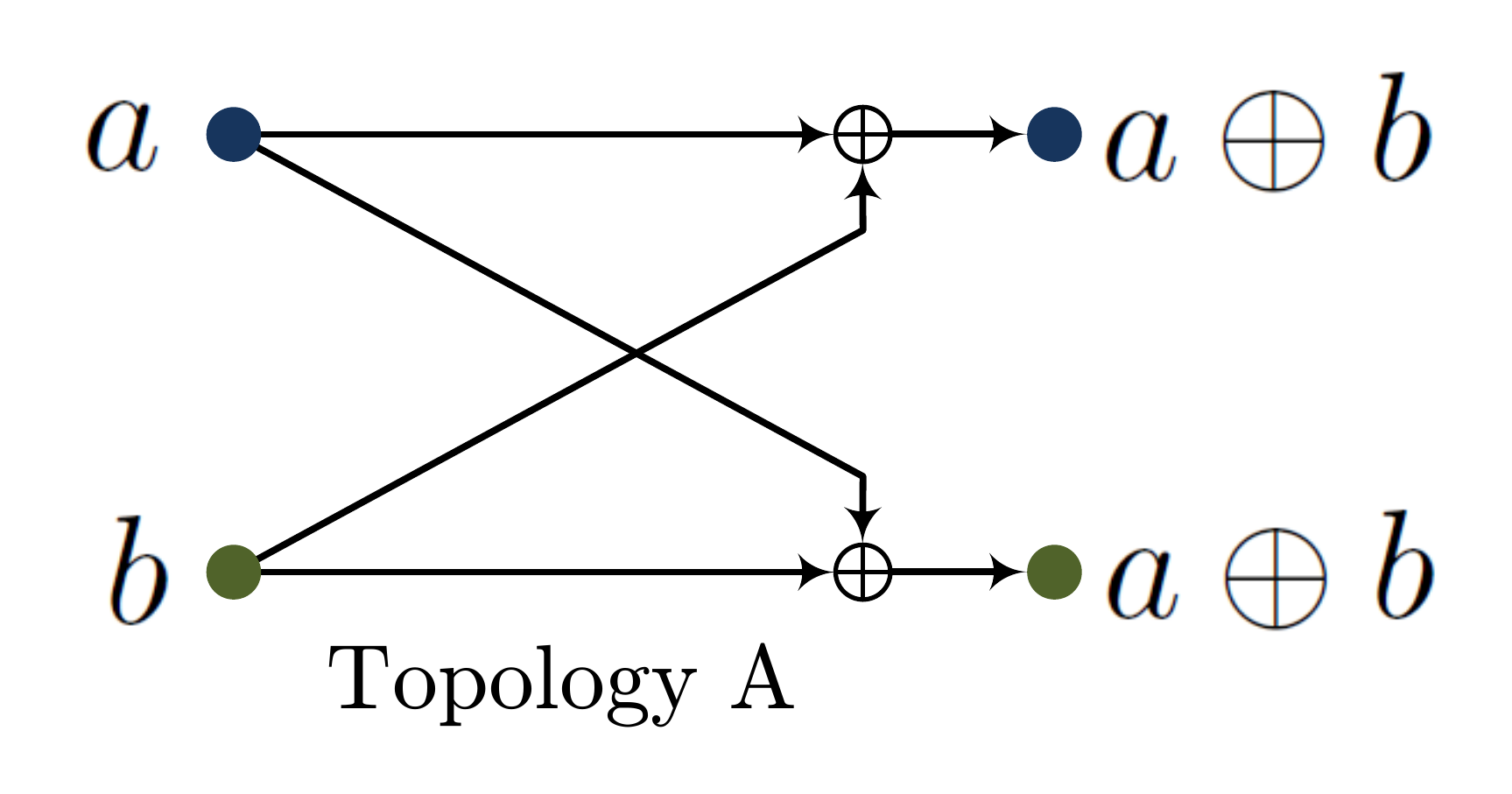}}
\hspace{.5in}
\subfigure[]{\includegraphics[width = 0.375\columnwidth]{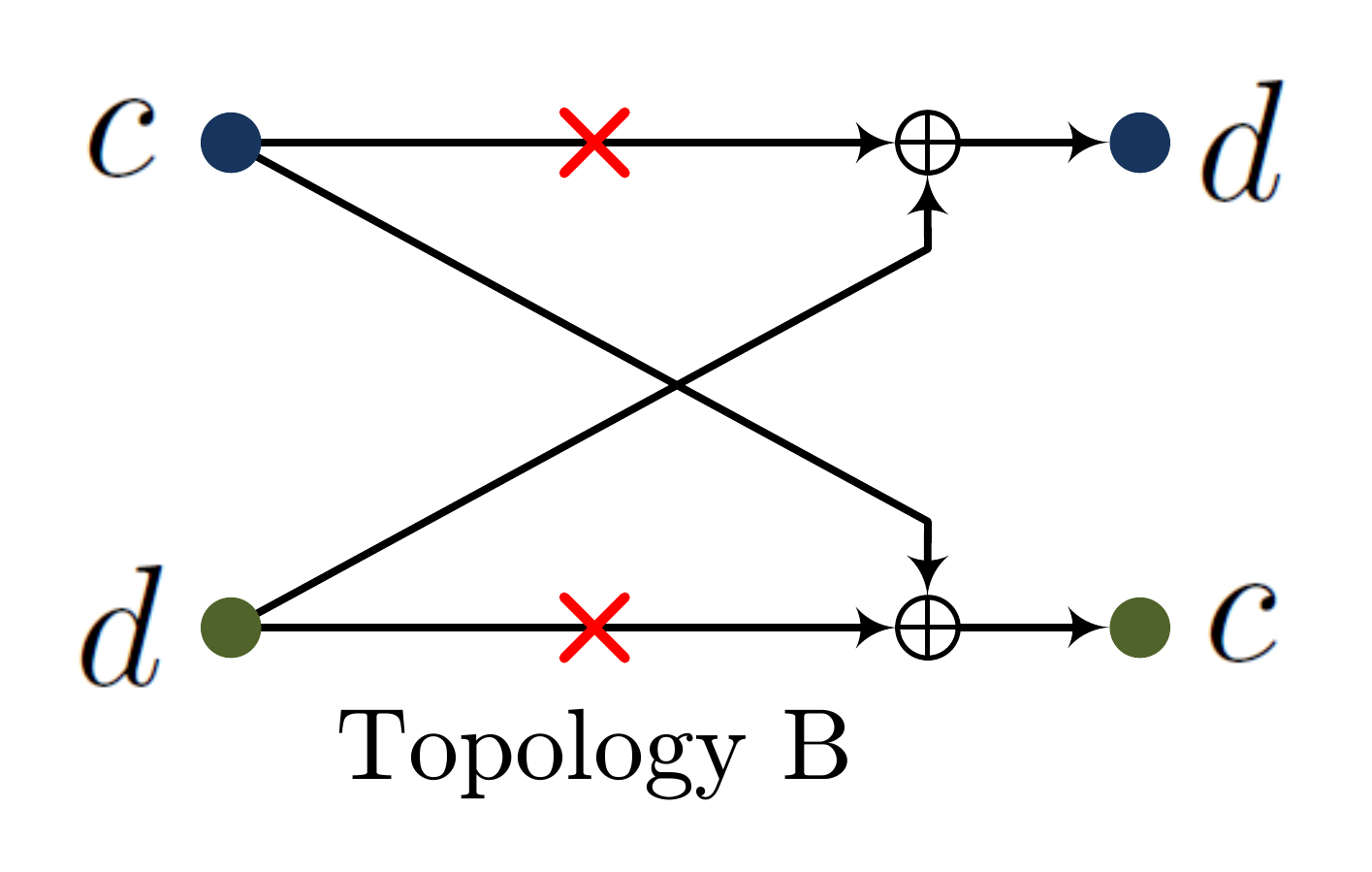}}
\caption{The transmitted bits in these examples can be combined for efficient multicast retransmission.\label{Fig:Cases-EIC}}
\end{figure}

Suppose at some time instant $t_1$, network topology $A$ is realized, meaning that all wireless links are active as in Figure~\ref{Fig:Cases-EIC}(a). If bits $a$ and $b$ were transmitted from $\msf{Tx}_1$ and $\msf{Tx}_2$, respectively, then each receiver obtains a linear combination of these bits\footnote{Note that if we assumed further continuous channel gains on top of the binary coefficients, these equations would have been almost surely linearly independent, which would be of value in general. However, by limiting ourselves to binary coefficients, the two equations are identical and need to make smart re-transmission choices in the future.}. It seems that providing only $a$ or $b$ to both receivers would be the optimal solution. However, we show there are other opportunities that can improve the network throughput. Suppose $c$ is part of $W_{1|2}$ (intended for $\msf{Rx}_1$ but in $\msf{Rx}_2$'s cache) and $d$ is part of $W_{2|1}$. This scenario could have also happened if at the time of transmitting $c$ and $d$, topology $B$ was realized as in Figure~\ref{Fig:Cases-EIC}(b). This time, it seems $c$ and $d$ could be sent to their respective receivers through two non-interfering point-to-point erasure links as they are apriori known to the unintended receiver. Interestingly, we can come up with a more efficient solution: $\msf{Tx}_1$ should deliver $a \oplus c$ and $\msf{Tx}_2$ should deliver $b \oplus d$ to {\it both} receivers. This way, $\msf{Rx}_1$ will end up with $a \oplus b, d, a \oplus c$, and $b \oplus d$, from which, it can recover $a$ and $c$. A similar story goes for $\msf{Rx}_2$. In summary, we achieved multicasting gains by mixing the signals available locally to each transmitter rather than retransmitting individual ones. 

\noindent {\bf Motivating example}: To keep the description short and convey the main points, for the motivating example, we focus on the maximum sum-rate point. We further use expected values of random variables as opposed to a more careful analysis involving concentration theorems and defer such analysis to the next subsection where we present the complete proof. We further choose an example where $\epsilon$ satisfies \eqref{Eq:conditionepsilon-EIC} of Theorem~\ref{THM:Achievability-EIC} with equality, which further shortens the description of achievability. In particular, we assume
\begin{align}
\dl = \frac{1}{5}, \epsilon = \frac{2}{3}.
\end{align}
This scenario corresponds to strong point-to-point erasure links (success rate of $4/5$) and when each receiver has apriori access to $1/3$ of the message of the other user. For these parameters, the maximum sum-rate point using \eqref{Eq:OuterFourCases} is:
\begin{align}
\left( R_1, R_2 \right) \approx \left( 0.62, 0.62 \right).
\end{align}
 
We start with $m$ bits for each receiver where $1/3m$ of the bits for each receiver is apriori known to the unintended receiver. Each transmitter separates its bits into two groups, the first are those known to the unintended receiver, called the side-information bits, and the second would be the complement of the first group. Each transmitter keeps sending out one bit from the second group until the channel realization learned through the feedback channel is {\emph not} topology $D$. This process on average takes
\begin{align}
\frac{\epsilon m}{1-\dl^2}. 
\end{align}
After this initial phase, each bit falls into three categories based on the topology that was realized during its transmission (topology $A$, $B$, or $C$). Those in topology $C$ are already delivered and no further action is needed. For those in topologies $A$ and $B$, we can retransmit the combination of them as discussed above. Further, for the choice of $\dl = 1/5$, there will be more bits associated with topology $A$ than $B$. We take advantage of this and mix the remaining bits of topology $A$ with the side-information bits at each transmitter, \emph{i.e.} those known apriori to the unintended receiver through the random cache. In this example, $\dl$ and $\epsilon$ were carefully chosen such that the number of bits in topology $A$ was exactly equal to those in topology $B$ and the random cache. The combined (XORed) bits can be delivered at the multiple-access channel (MAC) capacity formed at each receiver equal to $( 1 - \dl^2 )$. In summary, for this particular example, the total communication time is given by
\begin{align}
t_{\mathrm{total}} = \underbrace{\frac{\epsilon m}{1-\dl^2}}_{\text{initial~phase}} + \underbrace{\frac{\epsilon (1-\dl)^2 m}{(1-\dl^2)^2}}_{\text{multicasting~XORed~bits}} \approx 1.62,
\end{align}
which combined with the fact that each transmitter had $m$ bits to deliver, immediately implies the desired sum-rate.

\subsection{Proof of Theorem~\ref{THM:Achievability-EIC}} 

As discussed earlier in Section~\ref{Section:MainAch}, we first note that if $\dl > 1/2$, then the condition in~\eqref{Eq:conditionepsilon-EIC} alongside the fact that $0 \leq \epsilon \leq 1$, \emph{i.e.}
\begin{align}
\epsilon \geq \frac{1}{1+\frac{(1 - 2\dl)^+}{1+\dl}},
\end{align}
is simply equivalent to $\epsilon = 1$, or no side-information at either of the receivers. This case falls under the results of~\cite{AlirezaBFICDelayed}. Thus, in this section we assume $\dl \leq 1/2$, and \eqref{Eq:conditionepsilon-EIC} becomes
\begin{align}
\label{Eq:NewEpsilon-EIC}
\epsilon \geq \frac{1+\dl}{2 - \dl}.
\end{align}

\begin{figure}[!ht]
\centering
\includegraphics[width = 0.45\columnwidth]{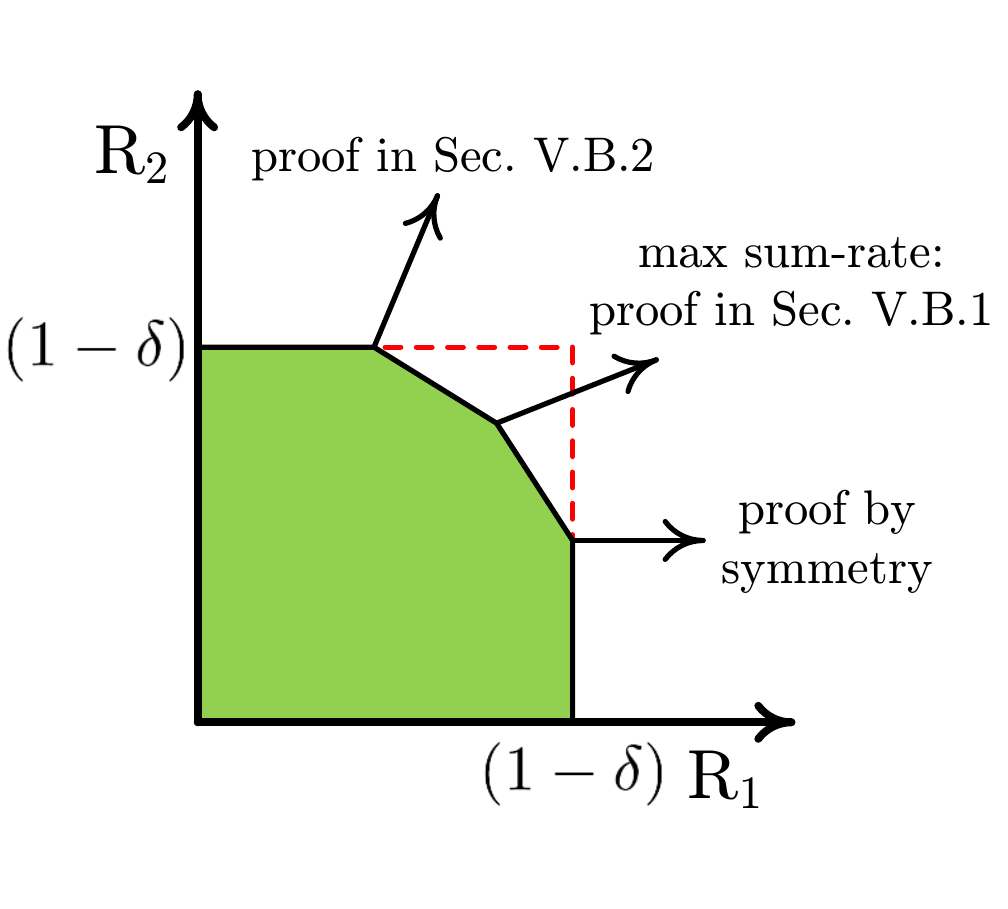}
\caption{The overall shape of the outer-bound region for $\dl \leq 1/2$ and when \eqref{Eq:conditionepsilon-EIC} is satisfied.\label{Fig:Region-Ach}}
\end{figure}

For this setting, all outer-bounds in Theorem~\ref{THM:Capacity_Out_EICwCache} are active and the region is depicted in Figure~\ref{Fig:Region-Ach}. To prove Theorem~\ref{THM:Achievability-EIC}, then it suffices to show the achievability of the maximum sum-rate point, \emph{i.e.}
\begin{align}
\label{Eq:MaxSumRate-EIC}
\left( R_1, R_2 \right) = \left( \frac{(1+\dl)(1-\dl^2)}{1+\dl+\epsilon}, \frac{(1+\dl)(1-\dl^2)}{1+\dl+\epsilon} \right),
\end{align}
and either of the corner points (the other would follow due to symmetry), \emph{e.g.},
\begin{align}
\label{Eq:MaxR2-EIC}
\left( R_1, R_2 \right) = \left( \frac{\dl(1+\dl)}{\epsilon} (1-\dl), (1-\dl) \right).
\end{align}
We remind the reader that according to Theorem~\ref{THM:Achievability-EIC}, we can achieve this latter corner-point when further \eqref{Eq:conditionepsilon2-EIC} is satisfied.

\subsubsection{Maximum sum-rate point}

We start with $m$ bits for each user out of which a fraction $(1-\epsilon)$ is available to the unintended receiver for $\epsilon$ satisfying \eqref{Eq:NewEpsilon-EIC}. As mentioned earlier, we refer to the bits known apriori to the receivers as ``side-information'' bits. The achievability strategy is divided into three phases. 

The first phase is uncoded, uncategorized transmission and each bit that is not known apriori to the unintended receiver, is repeated until the channel realization is not topology $D$ of Figure~\ref{Fig:Topologies}. In other words, each bit is repeated as long as all channels are equal to $0$. After this phase, three categories can be identified: delivered bits (topology $C$), side-information bits (those in Topology $B$ and the ones known apriori to the transmission at the receivers), and bits in topology $A$. 

Phase~2 benefits from the idea described in Section~\ref{Section:KeyIdeas} and XORs all the side-information bits with those in topology $A$, and sends the resulting bits at the multicast sum-rate of $(1-\dl^2)$. As we show shortly,  \eqref{Eq:NewEpsilon-EIC} guarantees there will be more bits (on average) in topology $A$ than side-information bits. 

In Phase~3, the remaining bits in topology $A$ need to be delivered. However, as mentioned in Section~\ref{Section:KeyIdeas}, as there are no more network coding opportunities available, it suffices to send half of these bits at the multicast sum-rate. For example, in the scenario of Figure~\ref{Fig:Cases-EIC}(a), delivering only $a$ or $b$ to both receivers is sufficient for decoding. We note that if \eqref{Eq:NewEpsilon-EIC} holds with equality, the communication ends after Phase~2 and Phase~3 is not needed.

\noindent {\bf Rate analysis}: Phase~1 takes a total time of
\begin{align}
t_{\mathrm{P.1}} = \frac{\epsilon m}{(1-\dl^2)} + \mathcal{O}\left( m^{2/3}\right),
\end{align}
where the big $O$ notation is used in its standard definition. The reason for the addition of $\mathcal{O}\left( m^{2/3}\right)$ is ensure all bits are communicated with probability approaching $1$ as $m \rightarrow \infty$. If at the end of Phase~1, there are still some bits left for transmission at either of the transmitters, we declare an error of type-I and terminate the communication.

Upon completion of Phase~1, denote the (random) number of bits at transmitter $\msf{Tx}_i$ in topology $A$ with $N_i^A$ and in topology $B$ with $N_i^B$. If either of the following inequalities hold, we declare an error of type-II and terminate the communication:
\begin{align}
& N_i^A < \frac{\Pr\left( \text{~topology~}A \right)}{1-\Pr\left( \text{~topology~}D \right)}\epsilon m - \mathcal{O}\left( m^{2/3}\right) = \frac{(1-\dl)^2}{(1-\dl^2)} \epsilon m -  \mathcal{O}\left( m^{2/3}\right), \nonumber\\
& N_i^B < \frac{\Pr\left( \text{~topology~}B \right)}{1-\Pr\left( \text{~topology~}D \right)}\epsilon m - \mathcal{O}\left( m^{2/3}\right) = \frac{\dl(1-\dl)}{(1-\dl^2)} \epsilon m -  \mathcal{O}\left( m^{2/3}\right). 
\end{align}

Phase~2 takes a total time of
\begin{align}
t_{\mathrm{P.2}} =  \frac{2}{(1-\dl^2)} \left( \underbrace{(1 - \epsilon)m}_{\text{apriori~side~info.}} + \underbrace{\frac{\dl(1-\dl)\epsilon m}{(1-\dl^2)}}_{\text{Topology~B}} \right) + \mathcal{O}\left( m^{2/3}\right),
\end{align}
 and if upon termination of Phase~2, there are still some bits left for transmission at either of the transmitters, we declare an error of type-I and terminate the communication.

If \eqref{Eq:NewEpsilon-EIC} holds with equality, the communication ends after Phase~2 and Phase~3 is not needed. Phase~3 takes a total time of
\begin{align}
t_{\mathrm{P.3}} =  \frac{1}{(1-\dl^2)} \left( \underbrace{\frac{(1-\dl)^2\epsilon m}{(1-\dl^2)}}_{\text{topology~}A} - \underbrace{(1 - \epsilon)m}_{\text{apriori~side~info.}} - \underbrace{\frac{\dl(1-\dl)\epsilon m}{(1-\dl^2)}}_{\text{Topology~B}} \right) + \mathcal{O}\left( m^{2/3}\right),
\end{align}
 and if upon termination of Phase~3, there are still some bits left for transmission at either of the transmitters, we declare an error of type-I and terminate the communication.

Using Bernstein inequality~\cite{bernstein1924modification}, we can show that the probability of errors of types I, II, and III decreases exponentially and approaches zero\footnote{In simple terms this inequality states that if $X_1,\ldots,X_r$ are $r$ independent random variables, and $M=\sum_{i=1}^r{X_i}$, then $Pr\left[ |M-\mathbb{E}\left[ M \right] | > \alpha \right] \leq 2 \exp \left( \frac{-\alpha^2}{4 \sum_{i=1}^r \mathrm{Var} \left( X_i \right)} \right)$.}  as $m \rightarrow \infty$. In fact, throughout this section, we intentionally picked $m^{\frac{2}{3}}$ to add to the constants in order to guarantee that the error terms vanish as $m$ increases. For instance, to bound the probability of error type-I, we have:
\begin{align}
& \Pr \left[ \mathrm{error~type \noindent - \noindent I} \right] \overset{\text{Union~Bound}}\leq  \sum_{i=1}^2{\Pr \left[ Q_{i \rightarrow i} \mathrm{~is~not~empty} \right]} \nonumber \\
&~= 4 \exp \left( \frac{-m^{4/3}}{4 \left( 1 - \dl^4 \right) \dl^4 \left[ \frac{1}{1-\dl^2} m + m^{\frac{2}{3}} \right]} \right),
\end{align}
which decreases exponentially to zero as $m \rightarrow \infty$. We refer the reader to~\cite{AlirezaBFICDelayed} for a more detailed discussion on using Bernstein inequality (and Chernoff-Hoeffding bound~\cite{Chernoff,Hoeffding}) to analyze the error probabilities. 

The total communication time can be calculated as follows:
\begin{align}
t_{\mathrm{total}} =t_{\mathrm{P.1}} + t_{\mathrm{P.2}} + t_{\mathrm{P.3}} = \frac{(1-\dl)}{(1-\dl^2)^2}\left( 1 + \dl + \epsilon \right) m + \mathcal{O}\left( m^{2/3}\right),
\end{align}
and as we started with a total of $2m$ bits, the maximum sum-rate corner-point given in \eqref{Eq:MaxSumRate-EIC} is achieved when $m \rightarrow \infty$.

\subsubsection{Maximum sum-rate when $R_2 = (1-\dl)$}

For this corner-point, based on the rates given in \eqref{Eq:MaxR2-EIC}, we start with $m$ bits for $\msf{Tx}_2$, and 
\begin{align}
\label{Eq:m1-EIC}
m_1 = \frac{\dl(1+\dl)}{\epsilon} m
\end{align}
bits for $\msf{Tx}_1$, where based on the choice parameters described in this section, as shown in \eqref{Eq:EpsilonLarger}, we have $m_1 < m$. This section will also demonstrate the challenges in achieving the outer-bounds when rates are unequal. The main idea is for $\msf{Tx}_1$ to be responsible for interference management, while $\msf{Tx}_2$ communicates at the maximum possible rate.

In Phase~1, $\epsilon m_1$ bits from each transmitter will be communicated similar to the maximum sum-rate case. In Phase~2, $\msf{Tx}_2$ will finish communicating the remainder of the bits that are not known apriori to $\msf{Tx}_1$, meanwhile, $\msf{Tx}_1$ delivers a portion of the bits that fell under topology $A$ in Phase~1 at a low enough rate, $\dl(1-\dl)$, such that both receivers can decode them. This way, $\msf{Tx}_1$ helps with interference management as $\msf{Tx}_2$ continues its communications. In Phase~3, $\msf{Tx}_2$ communicates the side-information bits (those known to $\msf{Tx}_1$), and meanwhile $\msf{Tx}_1$ encodes the remaining bits of topology $A$ at $\dl(1-\dl)$, its side-information bits at $(1-\dl)^2$, and sends the superposition of the resulting encoded bits. Figure~\ref{Fig:CornerPoint-Ach} pictorially summarizes the achievability strategy for this corner-point.

\begin{figure}[!ht]
\centering
\includegraphics[width = 0.75\columnwidth]{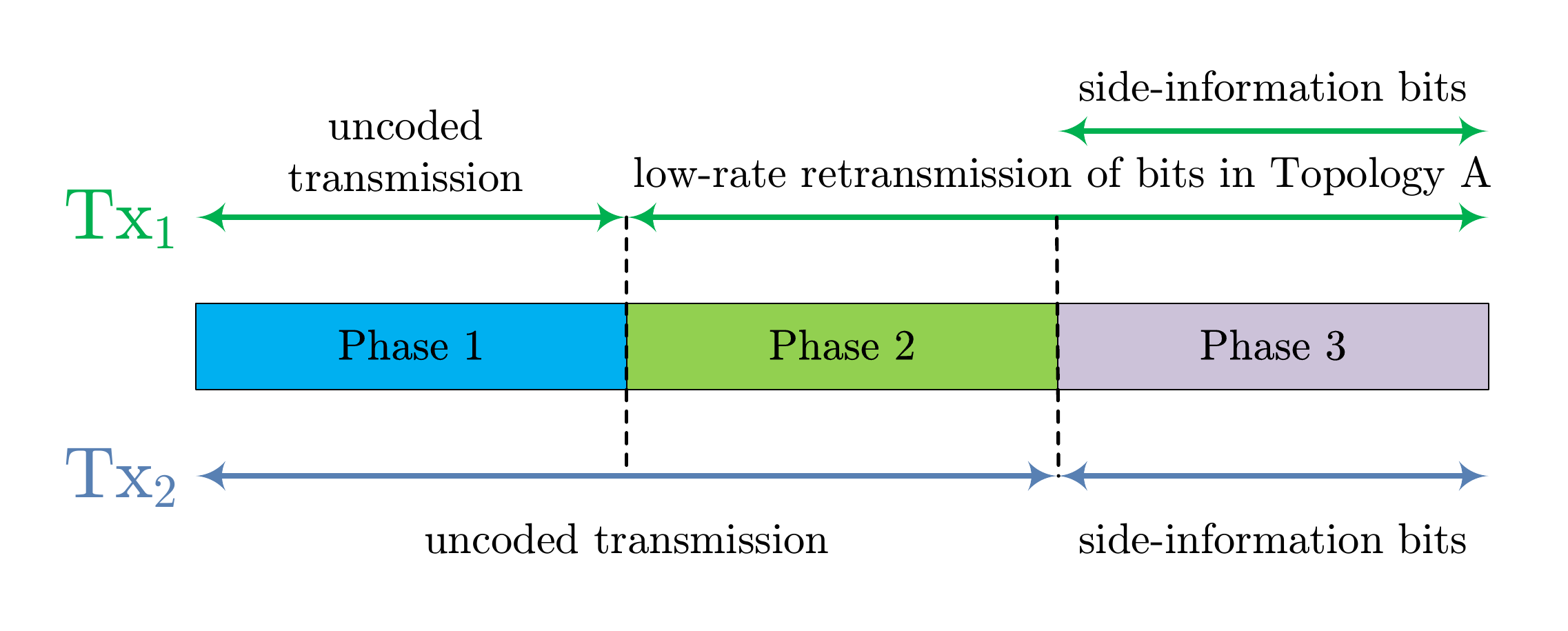}
\caption{A summary of the achievability strategy for the maximum sum-rate when $R_2 = (1-\dl)$.\label{Fig:CornerPoint-Ach}}
\end{figure}

%%%%%%%%%%%%%%%%%%%%%%%%%%%%%%

\noindent {\bf Rate analysis}: For simplicity, we eliminate the $\mathcal{O}\left( m^{2/3}\right)$ terms and the error analysis as they follow the same steps as the previous case. Phase~1 takes (on average) a total time of
\begin{align}
t_{\mathrm{P.1}} = \frac{\epsilon m_1}{(1-\dl^2)} \overset{\eqref{Eq:m1-EIC}}= \frac{\dl}{(1-\dl)}m .
\end{align}

\begin{comment}

We first provide a high-level intuition behind Phases~2 and~3, and then, we provide the details. As $\msf{Tx}_2$ needs to achieve a rate of $(1 - \dl)$, which is the maximum point-to-point rate, the burden of interference management falls onto $\msf{Tx}_1$. In Phase~1, the bits of $\msf{Tx}_1$ that fall under topology $A$ cause interference at $\msf{Rx}_2$. Further, $\msf{Rx}_2$ receives signal a fraction $(1 - \dl^2)$ of times, out of which $(1-\dl)$ is occupied by $\msf{Tx}_2$, leaving $\dl (1-\dl)$ available to $\msf{Tx}_1$ to help with interference management. Moreover, to guarantee a rate of $(1-\dl)$ for $\msf{Tx}_2$, these phases should have a total time of
\begin{align}
t_{\mathrm{P.2}} + t_{\mathrm{P.3}} = m,
\end{align} 
which when combined with the first phase, results in
\begin{align}
t_{\mathrm{total}} =t_{\mathrm{P.1}} + t_{\mathrm{P.2}} + t_{\mathrm{P.3}} \overset{\eqref{Eq:m1-EIC}}= \frac{m}{(1-\dl)}.
\end{align}
Given that $\msf{Tx}_2$ has $m$ bits and $\msf{Tx}_1$ has $m_1$ bits, given in \eqref{Eq:m1-EIC}, (assuming successful delivery which will be discussed below) the rates in \eqref{Eq:MaxR2-EIC} will be achieved for $m \rightarrow \infty$.

In Phase~2, $\msf{Tx}_2$ finishes the uncategorized delivery of its remaining bits that are not known apriori to $\msf{Rx}_1$. During this phase, $\msf{Tx}_1$ sends bits  

\end{comment}

%%%%%%%%%%%%%%%%%%%%%%%%%%%%%%

It is helpful to view Phases~2 and~3 from the perspective of $\msf{Tx}_1$. Essentially, during these two phases, $\msf{Tx}_1$ needs to deliver all its bits that fell under topology $A$ at a rate of $\dl (1-\dl)$. This rate captures when each receiver observes only the signal of $\msf{Tx}_1$. We will describe the decoding at each receiver is detail shortly, but based on the discussion, we have:
\begin{align}
t_{\mathrm{P.2}} + t_{\mathrm{P.3}} = \frac{1}{\dl(1-\dl)} \times \frac{(1-\dl)^2 \epsilon m_1}{(1-\dl^2)} = m,
\end{align}
which would result in:
\begin{align}
t_{\mathrm{total}} =t_{\mathrm{P.1}} + t_{\mathrm{P.2}} + t_{\mathrm{P.3}} \overset{\eqref{Eq:m1-EIC}}= \frac{m}{(1-\dl)}.
\end{align}
Given that $\msf{Tx}_2$ has $m$ bits and $\msf{Tx}_1$ has $m_1$ bits, given in \eqref{Eq:m1-EIC}, (assuming successful delivery which will be discussed below) the rates in \eqref{Eq:MaxR2-EIC} will be achieved for $m \rightarrow \infty$.

To show decodability of all bits, we first note that all the bits of $\msf{Tx}_1$ that fell under topology $A$ in Phase~1 are decodable at both receivers given the chosen rate of $\dl (1-\dl)$. In other words, $\msf{Tx}_2$ will never retransmit any of its bits under topology $A$ as it must deliver its bits at maximum point-to-point rate of $(1-\dl)$. In particular, during Phase~2, we can assume  $\msf{Rx}_2$ does not receive any interference as it can decode and subtract the contribution from $\msf{Tx}_1$.

Phase~3 is dedicated to the delivery of the side-information bits of $\msf{Tx}_2$, which will not cause any interference at $\msf{Rx}_1$, and meanwhile, $\msf{Tx}_1$ sends the superposition of the bits in topology $A$ and its side-information bits. The side-information bits do not cause any interference at $\msf{Rx}_2$, and the bits under topology $A$ are decodable at both receivers as discussed earlier. Thus, in Phase~3, we can assume interference-free transmission for both users. Phase~3 takes on average:
\begin{align}
t_{\mathrm{P.3}} = \frac{1}{(1-\dl)}\left( 1 - \frac{\epsilon}{(1+\dl)}\right),
\end{align}
and as long as this time is sufficient for the delivery of the side-information bits of $\msf{Tx}_1$ at rate $(1-\dl)^2$, the decodability is guaranteed.

The average number of the side-information bits at $\msf{Tx}_1$ is:
\begin{align}
\frac{\dl \left( 1 + \dl - \epsilon \right)}{\epsilon},
\end{align}
which at rate $(1-\dl)^2$, takes: 
\begin{align}
\frac{\dl \left( 1 + \dl - \epsilon \right)}{\epsilon (1-\dl)^2}
\end{align}
times to deliver. Thus, for this time to be less than or equal to $t_{\mathrm{P.3}}$, we require:
\begin{align}
\frac{\dl \left( 1 + \dl - \epsilon \right)}{\epsilon (1-\dl)^2} \leq t_{\mathrm{P.3}},
\end{align}
which in turn is equivalent to:
\begin{align}
\epsilon \geq \frac{\dl (1+\dl)}{(1-\dl)}.
\end{align}
This last inequality is what was required in Theorem~\ref{THM:Achievability-EIC} beyond \eqref{Eq:NewEpsilon-EIC} to guarantee the entire region would be achievable. Thus, the proof is completed.

\subsection{Discussion: when the conditions of Theorem~\ref{THM:Achievability-EIC} are violated}
\label{Section:EIC_Discussion}

To better understand the challenges when the conditions of Theorem~\ref{THM:Achievability-EIC} are violated, we provide an example in this section in which $\dl = 1/2$ and $\epsilon = 3/4$ (\emph{i.e.} only $1/4$ of each message is available to the unintended user). For these values, the outer-bound region of Theorem~\ref{THM:Capacity_Out_EICwCache} is equivalent to $R_i \leq 1/2$, $i = 1,2$. If these bounds are tight, the transmitters should be able to effectively eliminate all interference at the unintended users and achieve a sum-rate of $1$.

We focus on the maximum sum-rate and start with $m$ bits at each transmitter, out of which $1/4$ are side-information bits. The challenge is after the initial phase, as the four realizations in Figure~\ref{Fig:Topologies} are equiprobable for $\dl = 1/2$, the number of bits under topology $B$ equals that of the bits in topology $A$. Thus, after combining the bits, the side-information bits cannot be combined with any other bits and should be sent at the maximum point-to-point rate (\emph{i.e.} losing the multicast gain). The overall communication time for this example is given by:
\begin{align}
t_{\mathrm{total}} = \underbrace{\frac{1}{1-.5^2} \times \frac{3m}{4}}_{\text{Phase~1}} +  \underbrace{\frac{1}{1-.5^2} \times \frac{.5^2}{1-.5^2} \times \frac{3m}{4}}_{\text{Phase~2~multicast}} +  \underbrace{\frac{2}{1-.5} \times \frac{1m}{4}}_{\text{Phase~3~point-to-point}} = \frac{9m}{4}.
\end{align}
Thus, the achievable sum-rate is equal to $8/9$, which is below the sum-rate of $1$ the outer-bounds suggest. Rather surprisingly, this rate is even below the scenario with no side-information ($\dl = 1/2$ and $\epsilon = 1$) for which achievability of a sum-rate of $9/10$ was already derived in~\cite{AlirezaBFICDelayed}. In other words, in this particular example for our achievability scheme, it is better to ignore the side-information, suggesting that the inner-bounds need improvement. However, it seems no further linear coding opportunities are available to the transmitters. Therefore, it might be the case that to achieve the outer-bounds, one may need to exploit non-linear codes.

\section{Conclusion}
\label{Section:Conclusion_EIC}

In this paper, we studied the benefit of having random receiver cache in interference channels with altering topology and delayed feedback. We provided a new set of outer-bounds based on a key theorem that quantifies the baseline entropy available to each receiver under the specific assumptions of the problem. We showed that these bounds are tight under certain conditions, thus, characterizing the capacity region in such cases. We further discussed how the inner and the outer bounds behave when these conditions are violated. The next steps include investigating whether non-linear coding may improve the inner-bounds or the outer-bounds need improvement when the capacity remains open. Further, it would interesting to understand the implications of random receiver cache on latency and age of information in interference channels with altering topology.

%\appendices

\bibliographystyle{ieeetr}
{\footnotesize \bibliography{bib_FBBudget,bibTWR}}

\begin{thebibliography}{10}

\bibitem{maddah2015cache}
M.~A. Maddah-Ali and U.~Niesen, ``Cache-aided interference channels,'' in {\em
  2015 IEEE International Symposium on Information Theory (ISIT)},
  pp.~809--813, IEEE, 2015.

\bibitem{naderializadeh2017fundamental}
N.~Naderializadeh, M.~A. Maddah-Ali, and A.~S. Avestimehr, ``Fundamental limits
  of cache-aided interference management,'' {\em IEEE Transactions on
  Information Theory}, vol.~63, no.~5, pp.~3092--3107, 2017.

\bibitem{sengupta2017fog}
A.~Sengupta, R.~Tandon, and O.~Simeone, ``Fog-aided wireless networks for
  content delivery: Fundamental latency tradeoffs,'' {\em IEEE Transactions on
  Information Theory}, vol.~63, no.~10, pp.~6650--6678, 2017.

\bibitem{hachem2018degrees}
J.~Hachem, U.~Niesen, and S.~N. Diggavi, ``Degrees of freedom of cache-aided
  wireless interference networks,'' {\em IEEE Transactions on Information
  Theory}, vol.~64, no.~7, pp.~5359--5380, 2018.

\bibitem{zhang2017fundamental}
J.~Zhang and P.~Elia, ``Fundamental limits of cache-aided wireless bc:
  Interplay of coded-caching and csit feedback,'' {\em IEEE Transactions on
  Information Theory}, vol.~63, no.~5, pp.~3142--3160, 2017.

\bibitem{shariatpanahi2017multi}
S.~P. Shariatpanahi, G.~Caire, and B.~H. Khalaj, ``Multi-antenna coded
  caching,'' in {\em 2017 IEEE International Symposium on Information Theory
  (ISIT)}, pp.~2113--2117, IEEE, 2017.

\bibitem{vahid2011interference}
A.~Vahid, M.~A. Maddah-Ali, and A.~S. Avestimehr, ``Interference channel with
  binary fading: Effect of delayed network state information,'' in {\em 2011
  49th Annual Allerton Conference on Communication, Control, and Computing
  (Allerton)}, pp.~894--901, IEEE, 2011.

\bibitem{vahid2012binary}
A.~Vahid, M.~A. Maddah-Ali, and A.~S. Avestimehr, ``Binary fading interference
  channel with delayed feedback,'' in {\em 2012 IEEE International Symposium on
  Information Theory Proceedings}, pp.~1882--1886, IEEE, 2012.

\bibitem{AlirezaBFICDelayed}
A.~Vahid, M.~A. Maddah-Ali, and A.~S. Avestimehr, ``Capacity results for binary
  fading interference channels with delayed {CSIT},'' {\em IEEE Transactions on
  Information Theory}, vol.~60, no.~10, pp.~6093--6130, 2014.

\bibitem{vahid2014communication}
A.~Vahid, M.~A. Maddah-Ali, and A.~S. Avestimehr, ``Communication through
  collisions: {O}pportunistic utilization of past receptions,'' in {\em
  Proceedings INFOCOM}, pp.~2553--2561, IEEE, 2014.

\bibitem{mishra2017harnessing}
S.~Mishra, I.-H. Wang, and S.~N. Diggavi, ``Harnessing bursty interference in
  multicarrier systems with output feedback,'' {\em IEEE Transactions on
  Information Theory}, vol.~63, no.~7, pp.~4430--4452, 2017.

\bibitem{lin2018gaussian}
S.-C. Lin and I.-H. Wang, ``Gaussian broadcast channels with intermittent
  connectivity and hybrid state information at the transmitter,'' {\em IEEE
  Transactions on Information Theory}, vol.~64, no.~9, pp.~6362--6383, 2018.

\bibitem{sun2013topological}
H.~Sun, C.~Geng, S.~Jafar, {\em et~al.}, ``Topological interference management
  with alternating connectivity,'' in {\em International Symposium on
  Information Theory Proceedings (ISIT)}, pp.~399--403, IEEE, 2013.

\bibitem{issa2015two}
I.~Issa, S.~L. Fong, and A.~S. Avestimehr, ``Two-hop interference channels:
  Impact of linear schemes,'' {\em IEEE Transactions on Information Theory},
  vol.~61, no.~10, pp.~5463--5489, 2015.

\bibitem{vahid2016does}
A.~Vahid and R.~Calderbank, ``When does spatial correlation add value to
  delayed channel state information?,'' in {\em 2016 IEEE International
  Symposium on Information Theory (ISIT)}, pp.~2624--2628, IEEE, 2016.

\bibitem{vahid2018throughput}
A.~Vahid and R.~Calderbank, ``Throughput region of spatially correlated
  interference packet networks,'' {\em IEEE Transactions on Information
  Theory}, vol.~65, no.~2, pp.~1220--1235, 2018.

\bibitem{vahid2015impact}
A.~Vahid and R.~Calderbank, ``Impact of local delayed {CSIT} on the capacity
  region of the two-user interference channel,'' in {\em 2015 IEEE
  International Symposium on Information Theory (ISIT)}, pp.~2421--2425, IEEE,
  2015.

\bibitem{vahid2016two}
A.~Vahid and R.~Calderbank, ``Two-user erasure interference channels with local
  delayed {CSIT},'' {\em IEEE Transactions on Information Theory}, vol.~62,
  no.~9, pp.~4910--4923, 2016.

\bibitem{vahid2017binary}
A.~Vahid, M.~A. Maddah-Ali, A.~S. Avestimehr, and Y.~Zhu, ``Binary fading
  interference channel with no {CSIT},'' {\em IEEE Transactions on Information
  Theory}, vol.~63, no.~6, pp.~3565--3578, 2017.

\bibitem{lin2020fast}
P.-H. Lin, E.~A. Jorswieck, and B.~Matthiesen, ``On fast fading binary
  interference channels without channel state information at the transmitter,''
  in {\em 2020 IEEE International Conference on Communications Workshops (ICC
  Workshops)}, pp.~1--6, IEEE, 2020.

\bibitem{lin2020new}
P.-H. Lin, E.~A. Jorswieck, R.~F. Schaefer, M.~Mittelbach, and C.~R. Janda,
  ``New capacity results for fading gaussian multiuser channels with
  statistical csit,'' {\em IEEE Transactions on Communications}, 2020.

\bibitem{etkin2008gaussian}
R.~H. Etkin, D.~N. Tse, and H.~Wang, ``{G}aussian interference channel capacity
  to within one bit,'' {\em IEEE Transactions on Information Theory}, vol.~54,
  no.~12, pp.~5534--5562, 2008.

\bibitem{nassirpour2020throughput}
S.~Nassirpour and A.~Vahid, ``Throughput, delay, and complexity tradeoffs in
  interference channels,'' in {\em 2020 10th Annual Computing and Communication
  Workshop and Conference (CCWC)}, pp.~0348--0354, IEEE, 2020.

\bibitem{kassab2020space}
R.~Kassab, O.~Simeone, A.~Munari, and F.~Clazzer, ``Space diversity-based
  grant-free random access for critical and non-critical {IoT} services,'' in
  {\em ICC 2020-2020 IEEE International Conference on Communications (ICC)},
  pp.~1--6, IEEE, 2020.

\bibitem{kassab2020uncoordinated}
R.~Kassab, A.~Munari, F.~Clazzer, and O.~Simeone, ``Uncoordinated grant-free
  radio access via diversity for critical and non-critical {IoT} services,''
  {\em arXiv preprint arXiv:2003.09257}, 2020.

\bibitem{AlirezaISIT}
A.~Vahid and A.~S. Avestimehr, ``The two-user deterministic interference
  channel with rate-limited feedback,'' {\em Proceedings of the IEEE
  International Symposium on Information Theory (ISIT)}, July 2010.

\bibitem{vahid2012interference}
A.~Vahid, C.~Suh, and A.~S. Avestimehr, ``Interference channels with
  rate-limited feedback,'' {\em IEEE Transactions on Information Theory},
  vol.~58, no.~5, pp.~2788--2812, 2012.

\bibitem{shomorony2012two}
I.~Shomorony and A.~S. Avestimehr, ``Two-unicast wireless networks:
  Characterizing the degrees of freedom,'' {\em IEEE Transactions on
  Information Theory}, vol.~59, no.~1, pp.~353--383, 2012.

\bibitem{vahid2015informational}
A.~Vahid, I.~Shomorony, and R.~Calderbank, ``Informational bottlenecks in
  two-unicast wireless networks with delayed {CSIT},'' in {\em 53rd Annual
  Allerton Conference on Communication, Control, and Computing},
  pp.~1256--1263, IEEE, 2015.

\bibitem{vahid2019degrees}
A.~Vahid, ``On the degrees-of-freedom of two-unicast wireless networks with
  delayed {CSIT},'' {\em IEEE Transactions on Information Theory}, vol.~65,
  no.~8, pp.~5176--5188, 2019.

\bibitem{ghorbel2016content}
A.~Ghorbel, M.~Kobayashi, and S.~Yang, ``Content delivery in erasure broadcast
  channels with cache and feedback,'' {\em IEEE Transactions on Information
  Theory}, vol.~62, no.~11, pp.~6407--6422, 2016.

\bibitem{bernstein1924modification}
S.~Bernstein, ``On a modification of chebyshev's inequality and of the error
  formula of laplace,'' {\em Ann. Sci. Inst. Sav. Ukraine, Sect. Math}, vol.~1,
  no.~4, pp.~38--49, 1924.

\bibitem{Chernoff}
J.~Phillips, ``Chernoff-hoeffding inequality and applications,'' {\em arXiv
  preprint arXiv:1209.6396}, 2012.

\bibitem{Hoeffding}
W.~Hoeffding, ``Probability inequalities for sums of bounded random
  variables,'' {\em Journal of the American Statistical Association}, vol.~58,
  no.~301, pp.~13--30, 1963.

\end{thebibliography}

\end{document}